\documentclass[a4paper,11pt]{article}
\usepackage[top=25truemm,bottom=25truemm,left=15truemm,right=15truemm]{geometry}
% 数式
\usepackage{amsmath,amssymb,amscd,amsthm}
\usepackage{bm}
\usepackage{multicol}

% 文字
\usepackage[dvipdfmx]{color}
\usepackage{latexsym}
\usepackage{mathtools}
\usepackage{comment}
\usepackage[dvipdfmx]{xcolor}
\usepackage{cite}
% 画像
\usepackage{float}
\usepackage{here}
\usepackage[dvipdfmx]{graphicx}
%\renewcommand{\figurename}{Fig.}
%\renewcommand{\tablename}{Table }
% Page
%\usepackage{comment}
% alias

%\renewcommand{\proofname}{\textbf{証明}}
\newcommand{\Hline}[1]{\noalign{\hrule height #1}}   

\setcounter{MaxMatrixCols}{30}
\begin{document}
\title{Necessary and Sufficient Conditions for\\ Capacity-Achieving Private Information Retrieval \\with Non-Colluding and Colluding Servers}
\author{Atsushi Miki, Yusuke Morishita, Toshiyasu Matsushima}
\date{}
\maketitle

\begin{abstract}
  Private information retrieval (PIR) is a mechanism for efficiently downloading messages 
  while keeping the index secret. 
  Here, PIR schemes in which servers do not communicate with each other are called standard PIR schemes, 
  and PIR schemes in which some servers communicate with each other are called colluding PIR schemes. 
  The information-theoretic upper bound on efficiency has been derived in previous studies. 
  However, the conditions for PIR schemes to keep privacy, to decode the desired message, and to achieve that upper bound have not been clarified in matrix form. 
  In this paper, we prove the necessary and sufficient conditions for the properties of standard PIR and colluding PIR.
  Further, we represent the properties in matrix form. 
\end{abstract}

\section{Introduction}
Private information retrieval (PIR) is used as a mechanism to retrieve a message from database servers through efficient communication while keeping the index of the desired message secret to the servers.
The communication cost is defined by the entropy of sending a query (hereinafter, upload) and receiving a response (hereinafter, download).
The simplest way to keep privacy is to download the entire library from a single server, but the communication efficiency is low. 
To improve efficiency, a user sends an encoded query to multiple database servers and decodes their responses.

PIR schemes in which servers do not communicate with each other are called standard PIR schemes, and PIR schemes in which some servers communicate with each other are called colluding PIR schemes. 
PIR schemes are evaluated mainly by the download efficiency which is called rate.
Therefore, PIR schemes that achieve the information-theoretic upper bound of the rate (hereinafter, capacity) are desired. 

In other words, PIR has three properties: privacy to protect the user's choice, correctness to decode the desired message, and capacity achievability.

For correctness and privacy, various methods were proposed, and the capacities of standard PIR schemes \cite{7889028,sun2018capacity,8720234} and that ofcolluding PIR schemes \cite{8119895} were derived from previous studies. 
Also, some methods for the PIR schemes that achieve the capacity were proposed in some previous studies \cite{8119895,8457293,9145637,abs-1710-01190,9023400,8598994,8509632,8437880,8552404}.
However, how to satisfy these properties is not clarified as a query matrix propoerties.

In this study, we prove the necessary and sufficient conditions for privacy, correctness, and capacity achievability.
The conditions simplify the construction methods of PIR schemes that achieve capacity and make it possible to verify that methods have privacy and correctness.

The remainder of this paper is organized as follows: 
Chapter 2 sets up the problem and provides an overview of PIR.
In Chapter 3, we derive the conditions for standard PIR.
Chapter 4 presents an example of a capacity-achieving PIR and shows that the conditions are satisfied.
In Chapter 5, we derive the conditions for colluding PIR.

%The performance of the PIR is primarily evaluated by the communication efficiency.
\section*{Notation}
%For a positive integer $Z$, we use the notation $[1:Z] = \{1, 2, \dots , Z\}$.  
%For a vector $\mathcal{V}=(v[1],\dots,v[Z])$ and a set $\mathcal{s}=\{z_1,\dots,z_n\} \subset \{1,\dots,Z\}, n\in \mathbb{Z}, n>1$, $\mathcal{V}[{\mathcal{s}}]$ denotes $(v[z_1],\dots,v[z_n])$.
%For a matrix $\mathcal{M}$ and its row vectors $\mathcal{M}_1,\dots,\mathcal{M}_Z$, $\mathcal{M}_{\mathcal{s}}$ denotes $(\mathcal{M}[z_1]^\top,\dots,\mathcal{M}[z_n]^\top)^\top$.
%For a set $A$, define the notation $A_{n_1:n_2}, n_1, n_2 \in \mathbb{Z}$
%as the set $\{A_{n_1},A_{n_1+1}, \dots , A_{n_2}\}$ if $n_1 \leq n_2$.
%And if $n_1 > n_2$, $A_{n_1:n_2}$ is the empty set $\phi$.
%For a finite field which has order $p$, we use the notation $\mathbb{F}_p$.
For a fixed positive integer $Z$, $[1:Z] \coloneq \{1, 2, \dots , Z\}$.  
For a vector $\mathcal{V}=(v[1],\dots,v[Z])$ and a set $\mathbf{s}=\{s_1,\dots,s_n\} \subset [1:Z],\ n\in \mathbb{N}$, $\mathcal{V}[{\mathbf{s}}]$ denotes $(v[s_1],\dots,v[s_n])$.
For a $r \times c$ matrix $\mathcal{M}$, its row vectors are denoted by $\mathcal{M}_1,\dots,\mathcal{M}_r$, 
and for sets $\mathbf{t}=\{t_1,\dots,t_i\} \subset [1:r], \mathbf{u}=\{u_1,\dots,u_j\} \subset [1:c],\ i,j\in \mathbb{N}$,
$\mathcal{M}_{\mathbf{t}}$ denotes 
$\begin{pmatrix}
  \mathcal{M}_{t_1}\\
    \vdots \\
  \mathcal{M}_{t_i} 
\end{pmatrix}$, 
and 
$\mathcal{M}_{\mathbf{t}}[\mathbf{u}]$ denotes 
$\begin{pmatrix}
  \mathcal{M}_{t_1}[\mathbf{u}]\\
    \vdots \\
  \mathcal{M}_{t_i}[\mathbf{u}] 
\end{pmatrix}$.
We denote by $A_{n_1:n_2}, n_1, n_2 \in \mathbb{Z}$
the set $\{A_{n_1},A_{n_1+1}, \dots , A_{n_2},\}$, for $n_1 \leq n_2$;
%For a set $\mathbf{A}$, define the notation $\mathbf{A}_{n_1:n_2},\ n_1, n_2 \in \mathbb{N}$
%as the set $\{\mathbf{A}_{n_1},\mathbf{A}_{n_1+1}, \dots , \mathbf{A}_{n_2}\}$ if $n_1 \leq n_2$.
if $n_1 > n_2$, $\mathbf{A}_{n_1:n_2}$ is the empty set $\phi$.
For a finite field that has order $p$, we use the notation $\mathbb{F}_p$.

\section{Problem Statement}
We consider $S$ servers such that each server stores $M$ messages $W_{1:M}$, which are independent vectors of $L_w$ length.
\begin{align}
  &M \geq 2, S\geq 2,
  \\ & H(W_{1:M}) = H(W_1) + \dots + H(W_M),
  \\ & H(W_1) = \dots = H(W_M) = L_w.
\end{align}

A user retrieves a message $W_m\ (m\in [1:M])$
while keeping the selected index $m$ secret from the servers.
The flow of the process is shown in Fig. \ref{PIR_flow}.
The user sends the query $Q_j^{(m)}\ (j\in [1:S])$ 
generated by a random function $g$ to the servers 
and each server generates a response $X_j^{(m)}\ (j\in [1:S])$ 
by a function $h$ respectively and returns it to the user. 
Therefore, let $\theta$ be a random variable that indicates the index chosen by the user, then
for all $m,m^\prime \in [1:M]$, the following equation must be satisfied to keep the index $m$ secret from the servers.
\begin{eqnarray}
  [\mathrm{Privacy}] \quad \mathrm{Pr}\{Q_j^{(\theta)}=q_j | \theta=m\} = \mathrm{Pr}\{Q_j^{(\theta)}=q_j | \theta=m^{\prime}\}.
\end{eqnarray}
Let $\mathcal{T}$ be the index set of $T$ colluding servers,
then for any $m,m^\prime \in [1:M]$, the following equation must be satisfied to keep the index $m$ secret from the colluding servers.
\begin{eqnarray}
 [\mathrm{Colluding\ Privacy}] \quad \mathrm{Pr}\{Q_{\mathcal{T}}^{(\theta)}=q | \theta=m\} = \mathrm{Pr}\{Q_{\mathcal{T}}^{(\theta)}=q | \theta=m^{\prime}\}.
\end{eqnarray}

The user decodes $W_m$ using the responses and the function $d$.
Therefore, the following equation must be satisfied.
\begin{eqnarray}
 [\mathrm{Correctness}] \quad H(W_m|X_{1:S}^{(m)},Q_{1:S}^{(m)})=0.
\end{eqnarray}

\begin{figure}[tbh]
  \begin{center}
  \includegraphics[width=7.5cm]{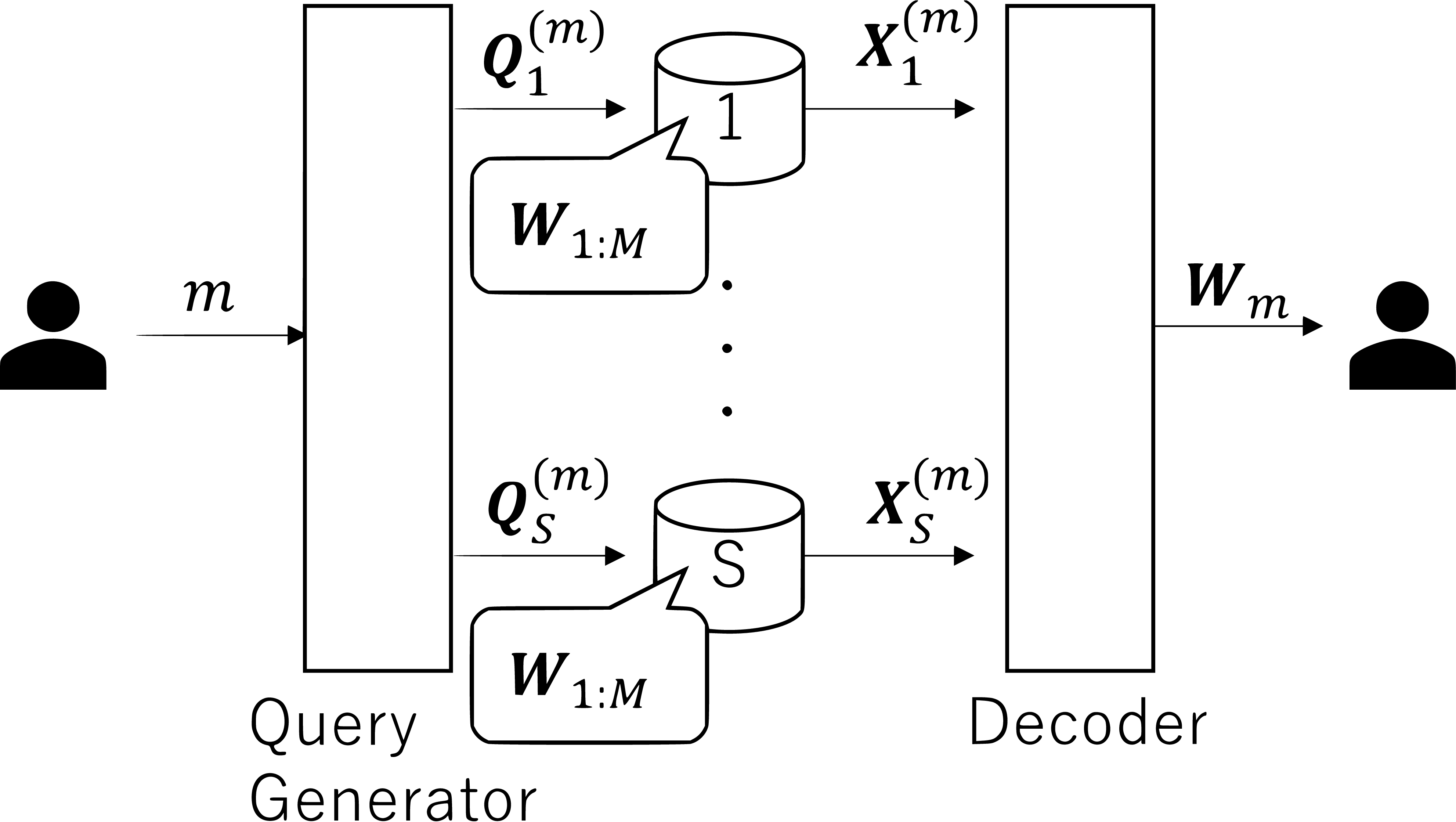}
  \end{center}
  \caption{PIR flow}
  \label{PIR_flow}
\end{figure}  

%特に$X_i^{(m)}=Q_i^{(m)}\begin{pmatrix}
%   W_1 \\ \vdots \\ W_M
%\end{pmatrix}$となるものを線形PIRと呼ぶ．

%PIR schemes in which servers do not communicate with each other is called a Standard PIR, 
%while PIR schemes in which $T$ servers communicate with each other are called colluding PIR schemes.

%本稿ではStandard PIRとColluding PIRの復元性，秘匿性，Capacity達成の必要十分条件を
%導出する． 
The rate is defined as follows:
\begin{eqnarray}
  \mathcal{R}=\min_{m\in [1:M]}\frac{H(W_m)}{\sum_{i=1}^S H(X_i^{(m)}|Q_i^{(m)})}.
\end{eqnarray}
The capacity is an information-theoretic upper bound on the rate $\mathcal{R}$, 
and the capacity of the standard PIR was derived from previous studies \cite{7889028} as follows:

\begin{eqnarray}
  [\mathrm{PIR\ Capacity}] \quad \mathcal{R} \leq \frac{1-\frac{1}{S}}{1-(\frac{1}{S})^M}.
 \end{eqnarray}
Let $T$ be the number of colluding servers, the capacity of the colluding PIR was derived as follows \cite{8119895}:
 \begin{eqnarray}
  [\mathrm{Colluding\ PIR\ Capacity}] \quad \mathcal{R} \leq \frac{1-\frac{T}{S}}{1-(\frac{T}{S})^M}.
\end{eqnarray}

%In this paper, we clarify the conditions for queries to have privacy, correctness, and capacity attainability.

\section{Standard PIR}
\subsection{Conditions for Correctness}
Let the queries be $Q_j^{(m)}=(Q_j^{(m)}[1],\dots,Q_j^{(m)}[M])$,
the responses are as follows:
\begin{align}
\begin{pmatrix}
X_1^{(m)} \\ \vdots \\X_S^{(m)} 
\end{pmatrix}
=
\begin{pmatrix}
  Q_1^{(m)}[1] & \dots & Q_1^{(m)}[M]  \\
 \vdots & \ddots & \vdots \\
  Q_S^{(m)}[1] & \dots & Q_S^{(m)}[M]
\end{pmatrix}
\begin{pmatrix}
  W_1 \\ \vdots \\ W_M 
\end{pmatrix}.
\end{align}

Using a decoding matrix $D_m$, the decoding scheme is as follows:
\begin{align}
D_m
\begin{pmatrix}
    X_1^{(m)} \\ \vdots \\X_S^{(m)} 
\end{pmatrix}
=
W_m  .
\end{align}
Therefore, there exists a matrix $D_m$ and the following equation should be satisfied:
\begin{align}
  D_m \begin{pmatrix}Q_1^{(m)}\\\vdots \\ Q_S^{(m)}\end{pmatrix} = 
    \begin{pmatrix}
      \mathbf{O}_{(m-1)L_w \times L_w}\\
      \mathbf{E}\\
      \mathbf{O}_{(M-m)L_w \times L_w}
     \end{pmatrix}^{\top},
\end{align}
where $\mathbf{O}_{l_1 \times l_2}$ denotes the $l_1 \times l_2$ zero matrix and 
$\mathbf{E}$ denotes the $L_w \times L_w$ identity matrix.

\subsection{Conditions for Privacy}
%Assume that a quary set $Q_{1:S}^{(m)}=q_{1:S}$ decodes only one message $W_m$,
%the following equations hold:
%\begin{align}
%  \begin{cases}
%      P(Q^{(\theta)}_{1:S}=q_{1:S}) = P(\theta = m, Q^{(\theta)}_{1:S}=q_{1:S}),
%      \\ P(\theta \neq m, Q^{(\theta)}_{1:S}=q_{1:S}) = 0.
%  \end{cases}\label{lem1}
%  \end{align}
%  The index chosen by the user is denoted by the random variable $\theta$ and is assumed to be chosen according to a uniform distribution in $[1:M]$.
%  Also, $H(Q^{(\theta)}_{1:S}|Q^{(\theta)}_j=q_j)$ is assumed to be constant regardless of $\theta$ and $j$.
%The following lemma holds:
Let $F$ be a random key which is uniformly distributed on a certain finite set $\mathcal{F}$.
Assume that a query set $Q_{1:S}^{(m)}$ is constructed by a bijective function $\psi(m,F)$, 
then $Q_{1:S}^{(\theta)}$ is chosen uniformly from a query realization set $\mathcal{Q}$.
Let $p(q_{1:S},m,f)=Pr\{Q_{1:S}^{(\theta)}=q_{1:S}, \theta=m, F=f\}$ be a probability mass function of the queries, then the following equation holds:
 \begin{align}
  p(q_{1:S}|m, f) = 
  \begin{cases}
    1 \quad \mathrm{if}\ q_{1:S}=\psi(m,f)\\
    0 \quad \mathrm{otherwise} \ .
  \end{cases}\label{lem1}
    \end{align}
 Therefore, the following lemma holds by using (\ref{lem1}):
\newtheorem{lemma}{Lemma}
\begin{lemma}[entropy of index under given queries]
\begin{align}
  H(\theta | Q_{1:S}^{(\theta)}) = 0.
\end{align}
\end{lemma}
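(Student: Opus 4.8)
The plan is to read the index off the realized query vector by inverting $\psi$. Since $\psi$ is a bijection from $[1:M]\times\mathcal{F}$ onto $\mathcal{Q}$, it admits an inverse $\psi^{-1}\colon\mathcal{Q}\to[1:M]\times\mathcal{F}$; writing $\pi_1$ for the projection onto the first coordinate, we have $\theta=\pi_1\bigl(\psi^{-1}(Q_{1:S}^{(\theta)})\bigr)$ with probability one, so $\theta$ is a deterministic function of the query vector. Conditioning a random variable on a statistic that determines it yields zero entropy, which is exactly $H(\theta\mid Q_{1:S}^{(\theta)})=0$.

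To turn this into a computation with the probability mass function, and thereby make direct use of (\ref{lem1}), I would proceed in three steps. First, write $H(\theta\mid Q_{1:S}^{(\theta)})=\sum_{q\in\mathcal{Q}}p(q)\,H(\theta\mid Q_{1:S}^{(\theta)}=q)$, so it suffices to show that for each $q\in\mathcal{Q}$ with $p(q)>0$ the posterior $\Pr\{\theta=m\mid Q_{1:S}^{(\theta)}=q\}$ is a point mass in $m$. Second, by Bayes' rule together with the law of total probability over the key $F$,
\[
\Pr\{\theta=m\mid Q_{1:S}^{(\theta)}=q\}=\frac{1}{p(q)}\sum_{f\in\mathcal{F}}p(q\mid m,f)\,p(m,f),
\]
and substituting (\ref{lem1}) collapses the inner sum to $\frac{1}{p(q)}\sum_{f\,:\,\psi(m,f)=q}p(m,f)$. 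Third, invoke bijectivity once more: for each $q\in\mathcal{Q}$ there is a unique pair $(m^\star,f^\star)$ with $\psi(m^\star,f^\star)=q$, so the sum above is nonzero only when $m=m^\star$; hence the posterior equals the indicator of $\{m=m^\star\}$ and $H(\theta\mid Q_{1:S}^{(\theta)}=q)=0$. Summing over $q$ then gives the claim.

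There is no real obstacle here — the statement is essentially immediate — and the only point that needs to be pinned down is the precise meaning of ``bijective'': I would state explicitly that $\psi$ is a bijection of the whole product $[1:M]\times\mathcal{F}$ onto $\mathcal{Q}$, rather than merely injective in $f$ for each fixed $m$, since it is this joint injectivity that makes the index recoverable from the query realization. It is worth noting, to preempt an apparent tension with the privacy requirement, that privacy constrains only the per-server marginals $Q_j^{(\theta)}$, whereas the lemma concerns the entire query vector $Q_{1:S}^{(\theta)}$, which the user possesses in full; so determining $\theta$ from $Q_{1:S}^{(\theta)}$ is harmless.
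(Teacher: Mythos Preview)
Your proof is correct and follows essentially the same approach as the paper: both use the bijectivity of $\psi$ together with (\ref{lem1}) to conclude that the posterior of $\theta$ given the full query vector is a point mass. Your version is somewhat more explicit---you condition only on $Q_{1:S}^{(\theta)}$ and carefully justify the point-mass posterior via Bayes' rule, whereas the paper's one-line computation effectively writes the expression for $H(\theta\mid Q_{1:S}^{(\theta)},F)$ (which coincides with $H(\theta\mid Q_{1:S}^{(\theta)})$ because $F$ is itself a function of $Q_{1:S}^{(\theta)}$)---but the underlying idea is the same.
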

\begin{proof}
  \begin{align}
    & H(\theta | Q_{1:S}^{(\theta)}) \notag
    \\ & = -\sum_{q_{1:S} \in \mathcal{Q}} \sum_{m \in [1:M]}  \sum_{f\in \mathcal{F}}p(q_{1:S},m,f)  \log p(m | q_{1:S}, f) \notag
    %\\ & = -\sum_{i=1}^M \sum_{q_{1:S}\in \mathcal{Q}} \mathrm{Pr}\{\theta=i , Q^{(\theta)}_{1:S}=q_{1:S}\} \log  \frac{P(\theta=i , Q^{(\theta)}_{1:S}=q_{1:S})}{P(Q^{(\theta)}_{1:S}=q_{1:S})} \notag
     \\&= 0\label{lem2}.
  \end{align}
\end{proof}
Let $\bar{j} \coloneq [1:S] \setminus \{j\}$.
Further, assume $H(Q_{1:S}^{\theta} | \theta=m)=H(Q_{1:S}^{\theta} | \theta=m^{\prime})$, then the following equations hold: %$H(Q^{(\theta)}_{1:S}|Q^{(\theta)}_j=q_j)$ is constant regardless of $\theta$:
    \begin{align}
      &\mathrm{Pr}\{Q_j^{(\theta)}=q_j | \theta=m\} = \mathrm{Pr}\{Q_j^{(\theta)}=q_j | \theta=m^{\prime}\}
      \\& \Rightarrow  H(Q_{j}^{(\theta)} | \theta=m) = H(Q_{j}^{(\theta)} | \theta=m^{\prime})
      \\& \Leftrightarrow  H(Q^{(\theta)}_{1:S}|Q^{(\theta)}_j=q_j, \theta=m) = H(Q^{(\theta)}_{1:S}|Q^{(\theta)}_j=q_j, \theta=m^{\prime})
      \\& \Leftrightarrow  H(Q^{(m)}_{1:S}|Q^{(m)}_j=q_j) = H(Q^{(m^{\prime})}_{1:S}|Q^{(m^{\prime})}_j=q_j). \label{asm}
    \end{align}  
Because of Lemma 1, Theorem 1 holds.
\newtheorem{theorem}{Theorem}
\begin{theorem}[Conditions for Privacy]
%  Let $\mathcal{Q}_{1:S}^{(\theta)}$ be the set of realizations for $Q_{1:S}^{(\theta)}$,
%  then the following equations hold. 
\begin{align}
  &\mathrm{Pr}\{Q_j^{(\theta)}=q_j | \theta=m\} = \mathrm{Pr}\{Q_j^{(\theta)}=q_j | \theta=m^{\prime}\}
  \\& \Leftrightarrow |\{Q_{1:S}^{(\theta)} | Q_{j}^{(\theta)} = q_j\}|=M|\{Q_{1:S}^{(m)} | Q_{j}^{(m)} = q_j\}|. \notag
\end{align}
\end{theorem}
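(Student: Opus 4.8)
The plan is to translate both sides of the claimed equivalence into a statement about the sizes of the $j$-th coordinate fibers of the query-realization sets, and then to observe that privacy is exactly the statement that these sizes are independent of the message index. Lemma 1 is what makes the combinatorics close: since $H(\theta\mid Q_{1:S}^{(\theta)})=0$, the realized query tuple determines $\theta$, so the realization sets attached to distinct indices are disjoint.

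Concretely, fix $j$ and $q_j$, and for each $m\in[1:M]$ put $\mathcal{Q}_m\coloneqq\psi(m,\mathcal{F})=\{\psi(m,f):f\in\mathcal{F}\}$, so that $\mathcal{Q}=\bigcup_{m}\mathcal{Q}_m$ and $\psi(m,\cdot)\colon\mathcal{F}\to\mathcal{Q}_m$ is a bijection. Since $F$ is uniform on $\mathcal{F}$, the law of $Q_{1:S}^{(\theta)}$ conditioned on $\theta=m$ (written $Q_{1:S}^{(m)}$) is the pushforward of a uniform law under an injection, hence uniform on $\mathcal{Q}_m$, with $|\mathcal{Q}_m|=|\mathcal{F}|$. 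Setting $A_m(q_j)\coloneqq\{q_{1:S}\in\mathcal{Q}_m:q_{1:S}[j]=q_j\}$, which is precisely the set $\{Q_{1:S}^{(m)}\mid Q_j^{(m)}=q_j\}$ on the right-hand side, this gives
\[
 \mathrm{Pr}\{Q_j^{(\theta)}=q_j\mid\theta=m\}=\frac{|A_m(q_j)|}{|\mathcal{F}|}.
\]
Next I would invoke Lemma 1 to conclude that $\mathcal{Q}_1,\dots,\mathcal{Q}_M$ are pairwise disjoint, so that the global fiber decomposes as $\{Q_{1:S}^{(\theta)}\mid Q_j^{(\theta)}=q_j\}=\bigsqcup_{m=1}^{M}A_m(q_j)$, whence
\[
 |\{Q_{1:S}^{(\theta)}\mid Q_j^{(\theta)}=q_j\}|=\sum_{m=1}^{M}|A_m(q_j)|.
\]

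With these two identities the equivalence is a one-line argument in each direction. Forward: privacy forces $|A_m(q_j)|$ to be a constant $c$ over $m\in[1:M]$, so $\sum_{m}|A_m(q_j)|=Mc=M\,|A_m(q_j)|$, which is the right-hand side. Converse: reading the right-hand side as holding for every $m$, we get $\sum_{m'}|A_{m'}(q_j)|=M\,|A_m(q_j)|$ for all $m$, whence $|A_m(q_j)|=\frac1M\sum_{m'}|A_{m'}(q_j)|$ is independent of $m$; dividing by $|\mathcal{F}|$ recovers the privacy equality. This runs entirely at the level of probabilities, so the entropy reformulation recorded around (\ref{asm}) is not strictly needed here, although it is the route by which the paper connects the two forms of the privacy condition.

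The main obstacle is not analytic but a matter of getting the set-theoretic bookkeeping exactly right: being explicit that $Q_{1:S}^{(m)}$ means ``$Q_{1:S}^{(\theta)}$ given $\theta=m$'' (so that uniformity on $\mathcal{Q}_m$, rather than on all of $\mathcal{Q}$, is the correct assertion), and noticing that the cardinality equality on the right-hand side must be understood uniformly in $m$ — a single, averaged instance of it would not suffice to run the converse. Everything else reduces to the fact that an injective image of a uniform distribution is uniform, together with the disjointness supplied by Lemma 1.
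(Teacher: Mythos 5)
Your argument is correct and proves the theorem, but it takes a genuinely different route from the paper's. The paper's proof runs a long chain of information-theoretic equivalences: it applies Bayes' rule to reach $\mathrm{Pr}\{\theta=m \mid Q_j^{(\theta)}=q_j\}=1/M$, rephrases this as $H(\theta\mid Q_j^{(\theta)}=q_j)=\log M$, invokes Lemma~1 to turn this into the mutual-information identity $I(\theta;Q_{\bar j}^{(\theta)}\mid Q_j^{(\theta)}=q_j)=\log M$, expands, uses uniformity of $\theta$ and the assumption recorded at~(\ref{asm}) that $H(Q_{1:S}^{(m)}\mid Q_j^{(m)}=q_j)$ is the same for every $m$, and only at the final step reads the resulting entropy gap $\log M$ as a factor-of-$M$ gap between cardinalities. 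Your proof never enters entropy: you use uniformity of $F$ and injectivity of $\psi(m,\cdot)$ to identify each conditional query probability with $|A_m(q_j)|/|\mathcal{F}|$, and you use Lemma~1 only for the qualitative consequence that $\mathcal{Q}_1,\dots,\mathcal{Q}_M$ are pairwise disjoint, so the global fiber is $\bigsqcup_m A_m(q_j)$. This buys you several things. First, the equivalence becomes a two-line counting argument. Second, you do not need the auxiliary hypothesis at~(\ref{asm}) as a separate assumption, since $|\mathcal{Q}_m|=|\mathcal{F}|$ for every $m$ is automatic from the setup. Third, you do not use uniformity of $\theta$ at all, whereas the paper uses it at the Bayes step and again when averaging $H(Q_{\bar j}^{(\theta)}\mid\theta=i,Q_j^{(\theta)}=q_j)$ with weights $1/M$; your version is therefore slightly more general. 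Finally, you explicitly flag a quantifier subtlety the paper leaves implicit: the cardinality condition on the right-hand side must be read as holding for all $m$, not a single $m$; a single averaged instance would not suffice for the converse. That observation is correct and worth stating.
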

\begin{proof}
  \begin{align}
  &\mathrm{Pr}\{Q_j^{(\theta)}=q_j | \theta=m\} = \mathrm{Pr}\{Q_j^{(\theta)}=q_j | \theta=m^{\prime}\}
  \\ &\Leftrightarrow \mathrm{Pr}\{Q_j^{(\theta)}=q_j, \theta=m\} = \mathrm{Pr}\{Q_j^{(\theta)}=q_j | \theta=m^{\prime}\} \mathrm{Pr}\{\theta = m\}
  \\ &\Leftrightarrow \mathrm{Pr}\{Q_j^{(\theta)}=q_j, \theta=m\} = \mathrm{Pr}\{Q_j^{(\theta)}=q_j , \theta=m^{\prime}\} \mathrm{Pr}\{\theta = m\} / \mathrm{Pr}\{\theta=m^{\prime}\}
  \\ &\Leftrightarrow \mathrm{Pr}\{\theta=m | Q_j^{(\theta)}=q_j\} = \mathrm{Pr}\{\theta=m^{\prime} | Q_j^{(\theta)}=q_j\} \mathrm{Pr}\{\theta = m\} / \mathrm{Pr}\{\theta=m^{\prime}\}
  \\ &\Leftrightarrow \mathrm{Pr}\{\theta=m | Q_j^{(\theta)}=q_j\} = \mathrm{Pr}\{\theta=m^{\prime} | Q_j^{(\theta)}=q_j\}
  \\ &\Leftrightarrow \mathrm{Pr}\{\theta=m | Q_j^{(\theta)}=q_j\} = \frac{1}{M} \label{p1}
  \\ &\Leftrightarrow H(\theta | Q_j^{(\theta)}=q_j) = \log M \label{p2}
  \\ &\Leftrightarrow H(\theta | Q_j^{(\theta)}=q_j) - H(\theta | Q_{\bar{j}}^{(\theta)},Q_j^{(\theta)}=q_j)= \log M  \label{p3} 
  \\ &\Leftrightarrow I(\theta;Q_{\bar{j}}^{(\theta)} | Q_j^{(\theta)}=q_j) = \log M
  \\ &\Leftrightarrow H(Q_{\bar{j}}^{(\theta)} | Q_j^{(\theta)}=q_j) - H(Q_{\bar{j}}^{(\theta)} | \theta ,Q_j^{(\theta)}=q_j) = \log M
  \\ &\Leftrightarrow H(Q_{\bar{j}}^{(\theta)} | Q_j^{(\theta)}=q_j) = \log M + H(Q_{\bar{j}}^{(\theta)} | \theta ,Q_j^{(\theta)}=q_j) 
  \\ &\Leftrightarrow H(Q_{\bar{j}}^{(\theta)} | Q_j^{(\theta)}=q_j) = \log M + \sum_{i=1}^M P(\theta=i) H(Q_{\bar{j}}^{(\theta)} | \theta=i ,Q_j^{(\theta)}=q_j) 
  \\ &\Leftrightarrow H(Q_{\bar{j}}^{(\theta)} | Q_j^{(\theta)}=q_j) = \log M + \sum_{i=1}^M \frac{1}{M}H(Q_{\bar{j}}^{(\theta)} | \theta=i ,Q_j^{(\theta)}=q_j) 
  %\\ &\Leftrightarrow H(Q_{\bar{j}}^{(\theta)} | Q_j^{(\theta)}=q_j) = \log M + \sum_{i=1}^M \frac{1}{M} (H(Q_{1:S}^{(\theta)} | \theta=i) - H(Q_{j}^{(\theta)} | \theta=i) )
  %\\ &\Leftrightarrow H(Q_{\bar{j}}^{(\theta)} | Q_j^{(\theta)}=q_j) = \log M + H(Q_{1:S}^{(\theta)} | \theta=m) - H(Q_{j}^{(\theta)} | \theta=m) \label{p4} 
  \\ &\Leftrightarrow H(Q_{\bar{j}}^{(\theta)} | Q_j^{(\theta)}=q_j) = \log M + H(Q_{\bar{j}}^{(\theta)} | \theta=m ,Q_j^{(\theta)}=q_j) \label{p4}
  \\ &\Leftrightarrow H(Q_{\bar{j}}^{(\theta)} | Q_j^{(\theta)}=q_j) = \log M + H(Q_{\bar{j}}^{(m)} | Q_j^{(m)}=q_j) \label{ent}
  \\ &\Leftrightarrow |\{Q_{1:S}^{(\theta)} | Q_{j}^{(\theta)} = q_j\}|=M|\{Q_{1:S}^{(m)} | Q_{j}^{(m)} = q_j\}| \label{sym}.
 \end{align}
\end{proof}
(\ref{p1}) is because $\theta$ follows a uniform distribution,
and the entropy is maximized in that case, so (\ref{p2}) holds. 
(\ref{p3}) is because of Lemma 1. %because $H(\theta | Q_{1:S}^{(\theta)})=0$.
(\ref{p4}) is because $H(Q_{1:S}^{(i)}|Q^{(i)}_j=q_j)=H(Q_{1:S}^{(i^\prime)}|Q^{(i^\prime)}_j=q_j)$.
(\ref{sym}) is because (\ref{ent}) indicates the number of query realizations is $M$ times larger when $\theta$ is not given than when $\theta = m$. 
\subsection{Conditions for Capacity Achievability}
Let $\pi$ be a random substitution for $[1:M]$.
The capacity of the standard PIR is derived using the following two lemmas \cite{7889028,8720234}:
\begin{lemma}[Interference Lower Bound Lemma]
    \begin{align}
     &\sum_{j=1}^S H(X_j^{(\pi[1])}|Q_{1:S}^{(\pi[m])})-L_w \notag
     \\&\geq I(W_{\pi[2:M]}; Q_{1:S}^{(\pi[1])}, X_{1:S}^{(\pi[1])}|W_{\pi[1]}). 
    \end{align}
\end{lemma}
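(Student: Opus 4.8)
The plan is to unwind the mutual information on the right-hand side using the chain rule and the correctness property, and to identify the quantity $\sum_{j=1}^S H(X_j^{(\pi[1])}\mid Q_{1:S}^{(\pi[m])})$ as an upper bound on the total entropy of the response vector. First I would observe that, by correctness, $H(W_{\pi[1]}\mid Q_{1:S}^{(\pi[1])}, X_{1:S}^{(\pi[1])})=0$, so that
\begin{align}
I(W_{\pi[2:M]}; Q_{1:S}^{(\pi[1])}, X_{1:S}^{(\pi[1])}\mid W_{\pi[1]})
&= H(Q_{1:S}^{(\pi[1])}, X_{1:S}^{(\pi[1])}\mid W_{\pi[1]}) - H(Q_{1:S}^{(\pi[1])}, X_{1:S}^{(\pi[1])}\mid W_{1:M}).
\end{align}
Since each $X_j^{(\pi[1])}$ is a deterministic function of $Q_j^{(\pi[1])}$ and $W_{1:M}$, the second term collapses to $H(Q_{1:S}^{(\pi[1])}\mid W_{1:M}) = H(Q_{1:S}^{(\pi[1])})$, the queries being independent of the messages. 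Expanding the first term as $H(Q_{1:S}^{(\pi[1])}) + H(X_{1:S}^{(\pi[1])}\mid Q_{1:S}^{(\pi[1])}, W_{\pi[1]})$ and cancelling, the right-hand side reduces to $H(X_{1:S}^{(\pi[1])}\mid Q_{1:S}^{(\pi[1])}, W_{\pi[1]})$.

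Next I would bound the left-hand side from below. The privacy property forces the query distribution to be independent of the requested index, so conditioning the responses on $Q_{1:S}^{(\pi[m])}$ in place of $Q_{1:S}^{(\pi[1])}$ does not change the relevant entropies; in particular $\sum_{j=1}^S H(X_j^{(\pi[1])}\mid Q_{1:S}^{(\pi[m])}) \ge \sum_{j=1}^S H(X_j^{(\pi[1])}\mid Q_{1:S}^{(\pi[1])}) \ge H(X_{1:S}^{(\pi[1])}\mid Q_{1:S}^{(\pi[1])})$, where the last step is subadditivity of entropy given the conditioning variable. It then remains to show
\begin{align}
H(X_{1:S}^{(\pi[1])}\mid Q_{1:S}^{(\pi[1])}) - L_w \ge H(X_{1:S}^{(\pi[1])}\mid Q_{1:S}^{(\pi[1])}, W_{\pi[1]}),
\end{align}
which is exactly $I(W_{\pi[1]}; X_{1:S}^{(\pi[1])}\mid Q_{1:S}^{(\pi[1])}) \ge H(W_{\pi[1]})$; and this in turn follows from correctness, since given the queries and responses $W_{\pi[1]}$ is determined, so the mutual information equals $H(W_{\pi[1]}\mid Q_{1:S}^{(\pi[1])}) = H(W_{\pi[1]}) = L_w$ by independence of messages and queries.

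The main obstacle I anticipate is the careful handling of the index substitution $\pi$ together with the switch from $Q_{1:S}^{(\pi[1])}$ to $Q_{1:S}^{(\pi[m])}$ in the conditioning: one must invoke the privacy property (equality of the query marginals across indices) precisely enough to justify replacing one conditioning query set by another without altering the response entropies, and to be sure the inequality goes in the correct direction. A secondary subtlety is keeping track of which entropies are genuinely zero by correctness versus merely small, and ensuring that the subadditivity step $H(X_{1:S}\mid\cdot)\le\sum_j H(X_j\mid\cdot)$ is applied with the conditioning variable held fixed throughout. Once these bookkeeping points are settled, the rest is routine chain-rule manipulation.
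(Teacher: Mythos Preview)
Your argument is essentially correct and follows the same skeleton as the paper's proof: reduce the mutual information on the right to an entropy of the response vector conditioned on the queries (and possibly $W_{\pi[1]}$), then compare with $\sum_j H(X_j^{(\pi[1])}\mid Q_{1:S}^{(\pi[1])})$ via subadditivity, and close the $L_w$ gap using correctness together with independence of messages and queries. The paper expands $I(W_{\pi[2:M]};X_{1:S}^{(\pi[1])}\mid Q_{1:S}^{(\pi[1])})$ as $H(X_{1:S}\mid Q)-H(X_{1:S}\mid Q,W_{\pi[2:M]})$ and then shows the second term equals $L_w$ by inserting $W_{\pi[1]}$ via correctness; you instead collapse the right side directly to $H(X_{1:S}\mid Q,W_{\pi[1]})$ and then use $I(W_{\pi[1]};X_{1:S}\mid Q)=L_w$. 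These are equivalent rearrangements of the same chain-rule identities.

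One point deserves a warning. The index $\pi[m]$ in the conditioning on the left of the stated inequality is a typo in the paper: the paper's own derivation uses $Q_{1:S}^{(\pi[1])}$ throughout and never switches to $Q_{1:S}^{(\pi[m])}$. Your attempt to bridge the two via privacy, asserting $\sum_j H(X_j^{(\pi[1])}\mid Q_{1:S}^{(\pi[m])}) \ge \sum_j H(X_j^{(\pi[1])}\mid Q_{1:S}^{(\pi[1])})$, is not justified as written: privacy controls the \emph{marginal} distribution of each $Q_j^{(\theta)}$ across indices, but $X_j^{(\pi[1])}$ is a function of $Q_j^{(\pi[1])}$ and the messages, so conditioning on the random variable $Q_{1:S}^{(\pi[m])}$ (coupled to $Q_{1:S}^{(\pi[1])}$ through the shared key $F$) does not automatically yield a larger conditional entropy than conditioning on $Q_{1:S}^{(\pi[1])}$. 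Simply read the statement with $m=1$ (as the paper effectively does) and drop that step; the rest of your proof then matches the paper's.
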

\begin{proof}
  \begin{align}
    &I(W_{\pi[2:M]}; Q_{1:S}^{(\pi[1])}, X_{1:S}^{(\pi[1])}|W_{\pi[1]})  
    \\&= I(W_{\pi[2:M]}; Q_{1:S}^{(\pi[1])}, X_{1:S}^{(\pi[1])}, W_{\pi[1]})
    \\&= I(W_{\pi[2:M]}; Q_{1:S}^{(\pi[1])}, X_{1:S}^{(\pi[1])})
    \\&= I(W_{\pi[2:M]}; X_{1:S}^{(\pi[1])}|Q_{1:S}^{(\pi[1])})
    \\&= H(X_{1:S}^{(\pi[1])}|Q_{1:S}^{(\pi[1])}) - H(X_{1:S}^{(\pi[1])} | Q_{1:S}^{(\pi[1])}, W_{\pi[2:M]})
    \\&\leq \sum_{j=1}^S H(X_j^{(\pi[1])}|Q_{1:S}^{(\pi[1])}) - H(X_{1:S}^{(\pi[1])} | Q_{1:S}^{(\pi[1])}, W_{\pi[2:M]})
    \\&= \sum_{j=1}^S H(X_j^{(\pi[1])}|Q_{1:S}^{(\pi[1])}) - H(X_{1:S}^{(\pi[1])},W_{\pi[1]} | Q_{1:S}^{(\pi[1])}, W_{\pi[2:M]}) + \underbrace{H(W_{\pi[1]} | Q_{1:S}^{(\pi[1])}, W_{\pi[2:M]}, X_{1:S}^{(\pi[1])})}_{=0}
    \\&= \sum_{j=1}^S H(X_j^{(\pi[1])}|Q_{1:S}^{(\pi[1])}) -  H(W_{\pi[1]} | Q_{1:S}^{(\pi[1])}, W_{\pi[2:M]}) -  \underbrace{H(X_{1:S}^{(\pi[1])} | Q_{1:S}^{(\pi[1])}, W_{1:M})}_{=0}
    \\&= \sum_{j=1}^S H(X_j^{(\pi[1])}|Q_{1:S}^{(\pi[1])})
     -L_w.
  \end{align}
\end{proof}

\begin{lemma}[Indution Lemma]
\begin{align}
  &I(W_{m:M} ; Q_{1:S}^{(\pi[m-1])}, X_{1:S}^{(\pi[m-1])}| W_{1:m-1})
  \\& \geq \frac{1}{S}I(W_{m+1:M} ; Q_{1:S}^{(\pi[m])}, X_{1:S}^{(\pi[m])}| W_{1:m}) + \frac{L_w}{S} \notag .
 \end{align}
\end{lemma}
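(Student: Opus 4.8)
The plan is to show that the entire right-hand side of the inequality collapses to the single quantity $\tfrac1S H(X_{1:S}^{(\pi[m])}\mid W_{1:m-1},Q_{1:S}^{(\pi[m])})$, and that the left-hand side is bounded below by exactly that. The ingredients, all available for any PIR scheme, are: (i) the query $Q_{1:S}^{(\pi[k])}$ is independent of $W_{1:M}$, because the user builds it without seeing the library; (ii) each response $X_j^{(\pi[k])}$ is a deterministic function of $Q_j^{(\pi[k])}$ and $W_{1:M}$; (iii) correctness, so that the $m$-th message is recovered from $(Q_{1:S}^{(\pi[m])},X_{1:S}^{(\pi[m])})$; (iv) mutual independence of the messages; and (v) privacy, used at the level of a single server.

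For the right-hand side I would first use (i) and (ii) to write $I(W_{m+1:M};Q_{1:S}^{(\pi[m])},X_{1:S}^{(\pi[m])}\mid W_{1:m}) = H(X_{1:S}^{(\pi[m])}\mid W_{1:m},Q_{1:S}^{(\pi[m])})$: the query leaves the mutual information by (i), and the term conditioned on all of $W_{1:M}$ vanishes by (ii). Next, using (i), (iii), (iv), $L_w = I(W_m;Q_{1:S}^{(\pi[m])},X_{1:S}^{(\pi[m])}\mid W_{1:m-1}) = H(X_{1:S}^{(\pi[m])}\mid W_{1:m-1},Q_{1:S}^{(\pi[m])}) - H(X_{1:S}^{(\pi[m])}\mid W_{1:m},Q_{1:S}^{(\pi[m])})$. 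Adding these two identities, the bracket $I(W_{m+1:M};\cdots\mid W_{1:m})+L_w$ telescopes to $H(X_{1:S}^{(\pi[m])}\mid W_{1:m-1},Q_{1:S}^{(\pi[m])})$, so the right-hand side equals $\tfrac1S H(X_{1:S}^{(\pi[m])}\mid W_{1:m-1},Q_{1:S}^{(\pi[m])})$.

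For the left-hand side I would first discard all but one server, $I(W_{m:M};Q_{1:S}^{(\pi[m-1])},X_{1:S}^{(\pi[m-1])}\mid W_{1:m-1}) \ge I(W_{m:M};Q_j^{(\pi[m-1])},X_j^{(\pi[m-1])}\mid W_{1:m-1})$ for every $j$, so the left-hand side is at least the average over $j$ of the $S$ per-server terms. Then, for each fixed $j$, privacy makes $(Q_j^{(\pi[m-1])},X_j^{(\pi[m-1])})$ have the same joint law with $W_{1:M}$ as $(Q_j^{(\pi[m])},X_j^{(\pi[m])})$ does: the two queries are identically distributed by (v), both are independent of the messages by (i), and in either case the response is obtained from the query and the messages by the same fixed map. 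Hence the $j$-th term equals $I(W_{m:M};Q_j^{(\pi[m])},X_j^{(\pi[m])}\mid W_{1:m-1}) = H(X_j^{(\pi[m])}\mid W_{1:m-1},Q_j^{(\pi[m])})$ by (i), (ii) again. Finally, subadditivity of entropy (after dropping the other servers' queries and responses from the conditioning, which only raises each term) gives $\sum_{j=1}^S H(X_j^{(\pi[m])}\mid W_{1:m-1},Q_j^{(\pi[m])}) \ge H(X_{1:S}^{(\pi[m])}\mid W_{1:m-1},Q_{1:S}^{(\pi[m])})$, so the left-hand side is at least $\tfrac1S H(X_{1:S}^{(\pi[m])}\mid W_{1:m-1},Q_{1:S}^{(\pi[m])})$, which is the right-hand side.

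The step that really needs care is the privacy exchange: in the non-colluding setting privacy pins down only the \emph{marginal} law of each server's query, so one cannot replace the whole vector $Q_{1:S}^{(\pi[m-1])}$ by $Q_{1:S}^{(\pi[m])}$ in one stroke -- the reduction to a single server must come first, the exchange is performed there, and only then may the pieces be reassembled by subadditivity. That ordering is precisely what manufactures the factor $\tfrac1S$. The only other thing to watch is the bookkeeping that makes $I(W_{m+1:M};\cdots\mid W_{1:m})+L_w$ collapse to the single conditional entropy $H(X_{1:S}^{(\pi[m])}\mid W_{1:m-1},Q_{1:S}^{(\pi[m])})$; everything else is the standard chain-rule and independence manipulation of the Sun--Jafar-type converse.
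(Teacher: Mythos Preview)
Your proof is correct and follows essentially the same route as the paper's: reduce to a single server, use privacy at that single-server level to swap $\pi[m-1]$ for $\pi[m]$, then use the chain rule/subadditivity to reassemble, and extract $L_w$ via correctness. Your presentation is in fact a bit cleaner than the paper's on the privacy step---the paper writes the swap with the full vector $Q_{1:S}^{(\pi[m-1])}\to Q_{1:S}^{(\pi[m])}$ still present, which looks like it needs more than non-colluding privacy, whereas you make explicit that the relevant mutual information depends only on the marginal of $Q_j$ before swapping.
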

\begin{proof}
  \begin{align}
    &I(W_{\pi[m:M]} ; Q_{1:S}^{(\pi[m-1])}, X_{1:S}^{(\pi[m-1])}| W_{\pi[1:m-1]})
    \\&\geq \frac{1}{S} \sum_{j=1}^S I(W_{\pi[m:M]} ; Q_{j}^{(\pi[m-1])}, X_{j}^{(\pi[m-1])}| W_{\pi[1:m-1]})
    \\&= \frac{1}{S} \sum_{j=1}^S I(W_{\pi[m:M]} ; Q_{1:S}^{(\pi[m-1])}, X_{j}^{(\pi[m-1])}| W_{\pi[1:m-1]}) \label{ind}
    \\&= \frac{1}{S} \sum_{j=1}^S I(W_{\pi[m:M]} ; Q_{1:S}^{(\pi[m])}, X_{j}^{(\pi[m])}| W_{\pi[1:m-1]}) \label{priv}
    %\\&= \frac{1}{S} \sum_{j=1}^S I(W_{\pi[m:M]} ; X_{j}^{(\pi[m])}| Q_{1:S}^{(\pi[m])}, W_{\pi[1:m-1]})
    %\\&= \frac{1}{S} \sum_{j=1}^S \left(H(X_{j}^{(\pi[m])}| Q_{1:S}^{(\pi[m])}, W_{\pi[1:m-1]}) - H(X_{j}^{(\pi[m])}| Q_{1:S}^{(\pi[m])}, W_{1:M})\right)
    \\&= \frac{1}{S} \sum_{j=1}^S H(X_{j}^{(\pi[m])}| Q_{1:S}^{(\pi[m])}, W_{\pi[1:m-1]}) 
    \\&\geq \frac{1}{S} \sum_{j=1}^S H(X_{j}^{(\pi[m])}| Q_{1:S}^{(\pi[m])}, W_{\pi[1:m-1]},X_{1:j-1}^{(\pi[m])})
    \\&= \frac{1}{S} \sum_{j=1}^S H(X_{j}^{(\pi[m])}| Q_{1:S}^{(\pi[m])}, W_{\pi[1:m-1]},X_{1:j-1}^{(\pi[m])}) - \frac{1}{S} \sum_{j=1}^S H(X_{j}^{(\pi[m])}| Q_{j}^{(\pi[m])}, W_{1:M},X_{1:j-1}^{(\pi[m])})
    \\&= \frac{1}{S} \sum_{j=1}^S I(W_{\pi[m:M]}; X_{j}^{(\pi[m])}| Q_{1:S}^{(\pi[m])}, W_{\pi[1:m-1]},X_{1:j-1}^{(\pi[m])})
    %\\&= \frac{1}{S} I(W_{\pi[m:M]} ; X_{1:S}^{(\pi[m])}| Q_{1:S}^{(\pi[m])}, W_{\pi[1:m-1]})
    %\\&= \frac{1}{S} I(W_{\pi[m:M]} ; Q_{1:S}^{(\pi[m])}, X_{1:S}^{(\pi[m])}| W_{\pi[1:m-1]})
    %\\&= \frac{1}{S} I(W_{\pi[m:M]} ; W_{\pi[m]}| W_{\pi[1:m-1]}) + \frac{1}{S} I(W_{\pi[m:M]} ; Q_{1:S}^{(\pi[m])}, X_{1:S}^{(\pi[m])}| W_{\pi[1:m]})
    \\&= \frac{1}{S} L_w + \frac{1}{S} I(W_{\pi[m:M]} ; Q_{1:S}^{(\pi[m])}, X_{1:S}^{(\pi[m])}| W_{\pi[1:m]})
    \\&=  \frac{1}{S} L_w + \frac{1}{S} I(W_{\pi[m+1:M]} ; Q_{1:S}^{(\pi[m])}, X_{1:S}^{(\pi[m])}| W_{\pi[1:m]}).
  \end{align}
\end{proof}
By iteratively applying Lemma 3 to Lemma 2, the capacity is obtained.
The necessary and sufficient conditions for achieving the capacity are that the inequalities in the derivations of Lemma 2 and Lemma 3 are satisfied with equal signs.
Thus, the following equations are the necessary and sufficient conditions for achieving the capacity.
\begin{align}
  &H(X_{1:S}^{(\pi[1])}|Q_{1:S}^{(\pi[1])}) = \sum_{j=1}^S H(X_j^{(\pi[1])}|Q_{1:S}^{(\pi[1])})\label{enum:m1},
  \\&I(W_{\pi[m:M]} ; Q_{1:S}^{(\pi[m-1])}, X_{1:S}^{(\pi[m-1])}| W_{\pi[1:m-1]}) = \frac{1}{S} \sum_{j=1}^S I(W_{\pi[m:M]} ; Q_{1:S}^{(\pi[m-1])}, X_{j}^{(\pi[m-1])}| W_{\pi[1:m-1]})\label{enum:m2},
  \\&\frac{1}{S} \sum_{j=1}^S H(X_{j}^{(\pi[m])}| Q_{1:S}^{(\pi[m])}, W_{\pi[1:m-1]}) = \frac{1}{S} \sum_{j=1}^S H(X_{j}^{(\pi[m])}| Q_{1:S}^{(\pi[m])}, W_{\pi[1:m-1]},X_{1:j-1}^{(\pi[m])})\label{enum:m3}.
\end{align}
  Let $\bar{\mathcal{I}}$ be an arbitrary subset of $[1:M]\setminus\{m\}$, and let $\mathcal{I}$ be the complement of $\bar{\mathcal{I}}$.
  The following equations are the summary of the above equations:
  \begin{align}
    &\forall i,j \in [1:S], \forall m \in [1:M], H(X_{j}^{(m)}|X_i^{(m)},W_{m}, Q_{1:S}^{(m)}) = 0\label{enum:m5},
      %\\&\forall j \in [1:S], \forall m \in [1:M], H(X_{j}^{(m)}| W_{\bar{m}},Q_{1:S}^{(m)}) = H(X_{j}^{(m)}| Q_{1:S}^{(m)})\label{enum:m6}.
      \\&\forall j \in [1:S], \forall m \in [1:M], \forall \bar{\mathcal{I}}\subset [1:M]\setminus\{m\}, H(X_{j}^{(m)}| Q_{1:S}^{(m)}, W_{\bar{\mathcal{I}}},X_{i}^{(m)}) = H(X_{j}^{(m)}| Q_{1:S}^{(m)}, W_{\bar{\mathcal{I}}})\label{enum:m7}.
      \end{align}
\begin{proof}
  First, because of the following equations, (\ref{enum:m3}) is a sufficient condition for (\ref{enum:m1}).
      \begin{align}
        &\frac{1}{S} \sum_{j=1}^S H(X_{j}^{(\pi[m])}| Q_{1:S}^{(\pi[m])}, W_{\pi[1:m-1]}) = \frac{1}{S} \sum_{j=1}^S H(X_{j}^{(\pi[m])}| Q_{1:S}^{(\pi[m])}, W_{\pi[1:m-1]},X_{1:j-1}^{(\pi[m])}) \\
        &\Leftrightarrow \sum_{j=1}^S H(X_{j}^{(\pi[m])}| Q_{1:S}^{(\pi[m])}, W_{\pi[1:m-1]}) = \sum_{j=1}^S H(X_{j}^{(\pi[m])}| Q_{1:S}^{(\pi[m])}, W_{\pi[1:m-1]},X_{1:j-1}^{(\pi[m])})\\
        &\Rightarrow \sum_{j=1}^S H(X_{j}^{(\pi[1])}| Q_{1:S}^{(\pi[1])}) = \sum_{j=1}^S H(X_{j}^{(\pi[1])}| Q_{1:S}^{(\pi[1])},X_{1:j-1}^{(\pi[1])}) \label{c2}\\
        &\Leftrightarrow \sum_{j=1}^S H(X_{j}^{(\pi[1])}| Q_{1:S}^{(\pi[1])}) = H(X_{1:S}^{(\pi[1])}|Q_{1:S}^{(\pi[1])}). \label{c3}
      \end{align}
      (\ref{c2}) is because we can set $m=1$, and in which case $\pi[1:m-1]=\phi$.
      (\ref{c3}) is because of the chain rule.
      Therefore, (\ref{enum:m1}) is unnecessary.
      Next, the following equations hold for (\ref{enum:m2}):
      \begin{align}
        &I(W_{\pi[m:M]} ; Q_{1:S}^{(\pi[m-1])}, X_{1:S}^{(\pi[m-1])}| W_{\pi[1:m-1]}) = \frac{1}{S} \sum_{j=1}^S I(W_{\pi[m:M]} ; Q_{1:S}^{(\pi[m-1])}, X_{j}^{(\pi[m-1])}| W_{\pi[1:m-1]})
        \\&\Leftrightarrow I(W_{\pi[m:M]} ; X_{1:S}^{(\pi[m-1])}| Q_{1:S}^{(\pi[m-1])}, W_{\pi[1:m-1]}) = \frac{1}{S} \sum_{j=1}^S I(W_{\pi[m:M]} ; X_{j}^{(\pi[m-1])} | Q_{1:S}^{(\pi[m-1])}, W_{\pi[1:m-1]})
        \\&\Leftrightarrow H(X_{1:S}^{(\pi[m-1])} | W_{\pi[1:m-1]}, Q_{1:S}^{(\pi[m-1])})-H(X_{1:S}^{(\pi[m-1])} | W_{1:M}, Q_{1:S}^{(\pi[m-1])}) \notag
        \\&=\frac{1}{S} \sum_{j=1}^S \left(H(X_{j}^{(\pi[m-1])} | W_{\pi[1:m-1]}, Q_{1:S}^{(\pi[m-1])}) - H(X_{j}^{(\pi[m-1])} |  W_{1:M}, Q_{1:S}^{(\pi[m-1])})\right)
        \\&\Leftrightarrow H(X_{1:S}^{(\pi[m-1])} | W_{\pi[1:m-1]}, Q_{1:S}^{(\pi[m-1])}) = \frac{1}{S} \sum_{j=1}^S H(X_{j}^{(\pi[m-1])} | W_{\pi[1:m-1]},Q_{1:S}^{(\pi[m-1])}) 
        \\&\Leftrightarrow \sum_{j=1}^S H(X_{j}^{(\pi[m-1])} | W_{\pi[1:m-1]}, Q_{1:S}^{(\pi[m-1])}, X_{1:j-1}) = \frac{1}{S} \sum_{j=1}^S H(X_{j}^{(\pi[m-1])} | W_{\pi[1:m-1]}, Q_{1:S}^{(\pi[m-1])}) \label{c1}
        \\&\Leftrightarrow 
        \forall i,j\in [1:S] ,H(X_j^{(\pi[m-1])} | W_{\pi[1:m-1]},Q_{1:S}^{(\pi[m-1])},X_i^{(\pi[m-1])}) = 0 
        \\&\Leftrightarrow H(X_j^{(\pi[m])} | W_{\pi[1:m]},Q_{1:S}^{(\pi[m])},X_i^{(\pi[m])}) = 0 
        \\&\Leftrightarrow H(X_j^{(m)} | W_{m},Q_{1:S}^{(m)},X_i^{(m)}) = 0 .\label{m0}
    \end{align}  
    (\ref{c1}) is because of the chain rule. (\ref{m0}) is because we can set $\pi[1]=m$.
    And for (\ref{enum:m3}), the following equations hold:
    \begin{align}
      &\sum_{j=1}^S H(X_{j}^{(\pi[m])}| Q_{1:S}^{(\pi[m])}, W_{\pi[1:m-1]}) = \sum_{j=1}^S H(X_{j}^{(\pi[m])}| Q_{1:S}^{(\pi[m])}, W_{\pi[1:m-1]},X_{1:j-1}^{(\pi[m])}) \\
      &\Leftrightarrow H(X_{i}^{(\pi[m])}|Q_{1:S}^{(\pi[m])},W_{\pi[1:m-1]},X_j^{(\pi[m])})=H(X_i^{(\pi[m])}|Q_{1:S}^{(\pi[m])},W_{\pi[1:m-1]}) \notag
      \\& \Leftrightarrow \forall  \bar{\mathcal{I}}\subset [1:M]\setminus\{m\}, H(X_{j}^{(m)}| Q_{1:S}^{(m)}, W_{\bar{\mathcal{I}}},X_{i}^{(m)}) = H(X_{j}^{(m)}| Q_{1:S}^{(m)}, W_{\bar{\mathcal{I}}}) \label{c4}.
    \end{align}
  \end{proof}
 (\ref{c4}) is because we can set $\pi[m]=m$.
  Note that the query $Q_j^{(m)}$ to server $j$ is written as $Q_j^{(m)}=(Q_j^{(m)}[1],\dots,Q_j^{(m)}[M])$, 
  let $\bar{m} \coloneq [1:M]\setminus \{m\}$, 
  the necessary and sufficient conditions for capacity can be rewritten as follows:
  \begin{align}
  &\forall j \in [1:S], \forall m \in [1:M], \mathrm{rank}Q_{1:S}^{(m)}[\bar{m}]=\mathrm{rank}\ Q_j^{(m)}[\bar{m}] \label{enum:m8},  %(Q_1^{(m)}[\bar{m}]^{\top},\dots,Q_S^{(m)}[\bar{m}]^{\top})^{\top}=\mathrm{rank}\ Q_j^{(m)}[\bar{m}] 
  %\\&\forall j \in [1:S], \forall m \in [1:M], Q_j^{(m)}[m] \neq \mathbf{O} \label{enum:m9}
  \\&\forall j \in [1:S], \forall m \in [1:M], \forall \bar{\mathcal{I}}\subset [1:M]\setminus\{m\}, \mathrm{rank}\ Q_{1:S}^{(m)}[\mathcal{I}]^{\top}=\sum_{j=1}^{S} \mathrm{rank}\ Q_j^{(m)}[\mathcal{I}] \label{enum:m10}. %\mathrm{rank}\ (Q_1^{(m)}[\mathcal{I}]^{\top},\dots,Q_S^{(m)}[\mathcal{I}]^{\top})^{\top}=\sum_{j=1}^{S} \mathrm{rank}\ Q_j^{(m)}[\mathcal{I}]
  \end{align}
\begin{proof}
  For all $i,j\in [1:S]$, the server responses are given as follows:
  \begin{align}
   X_i^{(m)}=Q_i^{(m)} [\bar{m}]W_{\bar{m}} + Q_i^{(m)}[m]W_m, \\
   X_j^{(m)}=Q_j^{(m)} [\bar{m}]W_{\bar{m}} + Q_j^{(m)}[m]W_m.
  \end{align}
  (\ref{enum:m5}) indicates $X_j^{(m)}$ is determined by $X_i^{(m)}$ 
  without $W_{\bar{m}}$, so $Q_j^{(m)}[\bar{m}]$ is a multiple of $Q_i^{(m)}[\bar{m}]$.  
  Therefore, (\ref{enum:m8}) holds.
  The responses from the servers can also be given as follows:
  \begin{align}
    X_i^{(m)}=Q_i^{(m)} [\bar{\mathcal{I}}]W_{\bar{\mathcal{I}}} + Q_i^{(m)}[\mathcal{I}]W_\mathcal{I}, \\
    X_j^{(m)}=Q_j^{(m)} [\bar{\mathcal{I}}]W_{\bar{\mathcal{I}}} + Q_j^{(m)}[\mathcal{I}]W_\mathcal{I}.
   \end{align}
   (\ref{enum:m7}) indicates $X_j^{(m)}$ is not obtained by $X_i^{(m)}$ without knowing $W_\mathcal{I}$,
   so $Q_i^{(m)}[\mathcal{I}]$ and $Q_j^{(m)}[\mathcal{I}]$ are independent.
   Therefore, (\ref{enum:m10}) holds.
\end{proof}

\section{Example}
\subsection{PIR method Description}
The user generates the queries as follows:
%\begin{equation}
%  (Z_1,...,Z_{m-1},z_j,Z_{m+1},...,Z_M) \in \mathbb{F}_S^M
%\end{equation}
\begin{align}
  Q_1^{(m)}=
  &(
    Z_1 , \dots , Z_1^{S-1} ,
    \dots ,
    Z_{m-1} , \dots , Z_{m-1}^{S-1} ,
    z_1 , \dots , z_1^{S-1} ,
    Z_{m+1} , \dots , Z_{m+1}^{S-1} ,
    \dots ,
    Z_M , \dots , Z_M^{S-1}
  ), \notag  \\
  %&\begin{pmatrix}
  %  Z_1 & \dots & Z_1^{S-1} &
  %  \dots & 
  %  Z_{m-1} & \dots & Z_{m-1}^{S-1} &
  %  z_1 & \dots & z_1^{S-1} &
  %  Z_{m+1} & \dots & Z_{m+1}^{S-1} &
  %  \dots & 
  %  Z_M & \dots & Z_M^{S-1}
  %\end{pmatrix}, \notag \\
  Q_2^{(m)}=
  &(
    Z_1 , \dots , Z_1^{S-1} ,
    \dots , 
    Z_{m-1} , \dots , Z_{m-1}^{S-1} ,
    z_2 , \dots , z_2^{S-1} ,
    Z_{m+1} , \dots , Z_{m+1}^{S-1} ,
    \dots ,
    Z_M , \dots , Z_M^{S-1}
  ), \notag \\
  %Q_2^{(m)}=
  %&\begin{pmatrix}
  %  Z_1 & \dots & Z_1^{S-1} &
  %  \dots & 
  %  Z_{m-1} & \dots & Z_{m-1}^{S-1} &
  %  z_2 & \dots & z_2^{S-1} &
  %  Z_{m+1} & \dots & Z_{m+1}^{S-1} &
  %  \dots & 
  %  Z_M & \dots & Z_M^{S-1}
  %\end{pmatrix}, \notag \\
  &\vdots \notag \\
  Q_S^{(m)}=
  &(
    Z_1 , \dots , Z_1^{S-1} ,
    \dots ,
    Z_{m-1} , \dots , Z_{m-1}^{S-1} ,
    z_S , \dots , z_S^{S-1} ,
    Z_{m+1} , \dots , Z_{m+1}^{S-1} ,
    \dots ,
    Z_M , \dots , Z_M^{S-1}
  ). 
  %Q_S^{(m)}=
  %&\begin{pmatrix}
  %  Z_1 & \dots & Z_1^{S-1} &
  %  \dots & 
  %  Z_{m-1} & \dots & Z_{m-1}^{S-1} &
  %  z_S & \dots & z_S^{S-1} &
  %  Z_{m+1} & \dots & Z_{m+1}^{S-1} &
  %  \dots & 
  %  Z_M & \dots & Z_M^{S-1}
  %\end{pmatrix}. 
\end{align}
Assume that $Z_1,\dots, Z_{m-1}, Z_{m+1},\dots, Z_M$ follow a uniform distribution on the finite field $\mathbb{F}_S$ 
and $z_j$ is also uniformly chosen to be different from each other on $\mathbb{F}_S$.
Each server divides the messages into $S-1$ equal-length parts as follows:
\begin{equation}
  k\in [1:M],
  W_k=
%\begin{pmatrix}
%  w_{k,1}\\ \vdots \\ w_{k,S-1} 
%\end{pmatrix}.
(w_{k,1} , \dots , w_{k,S-1}).
\end{equation}
Let $w_{k,1},\dots,w_{k,S-1}$ follow a uniform distribution on $\mathbb{F}_S$.
The responses and the queries are constructed as follows:
\begin{equation}
  \begin{split}
    \begin{pmatrix}
      X_{1}\\
      X_{2}\\ 
      \vdots \\
      X_{S}    
    \end{pmatrix}
  = 
  \begin{pmatrix}
    1 & z_1 & z_1^2 & \dots & z_1^{S-1}\\
    1 & z_2 & z_2^2 & \dots & z_2^{S-1}\\
    \vdots & \vdots & \vdots & \ddots & \vdots \\
    1 & z_S & z_S^2 & \dots & z_S^{S-1}\\    
  \end{pmatrix} 
  \begin{pmatrix}
    \beta\\
    w_{m,1}\\
    \vdots\\
    w_{m,S-1}
  \end{pmatrix},
  \beta \coloneq \sum_{k\in[1:M]\setminus \{m\}}
  \begin{pmatrix}
    Z_k & Z_k^2 & \dots & Z_k^{S-1}
  \end{pmatrix} 
  \begin{pmatrix}
    w_{k,1}\\\vdots\\w_{k,S-1}   
  \end{pmatrix}.
\end{split}
\end{equation}

\subsection{Correctness}
In this method, the following matrix can be a decoding matrix:
\begin{align}
D_m=
\begin{pmatrix}
  0 & 1 & 0 & \dots & 0 \\
  0 & 0 & 1 & \dots & 0 \\
  \vdots & \vdots & 0 & \ddots & \vdots\\
  0 & 0 & 0 & \dots & 1   
\end{pmatrix}
\begin{pmatrix}
  1 & z_1 & z_1^2 & \dots & z_1^{S-1} \\ 
  1 & z_2 & z_2^2 & \dots & z_2^{S-1} \\
  \vdots & \vdots & \vdots & \ddots & \vdots \\
  1 & z_S & z_S^2 & \dots & z_S^{S-1}
\end{pmatrix}^{-1}.
\end{align}
\begin{proof}
  {\footnotesize
  \begin{align}
    &\begin{pmatrix}
      X_{1}\\
      X_{2}\\ 
      \vdots \\
      X_{S}    
    \end{pmatrix}
    = \notag \\
    &\begin{pmatrix}
      Z_1 & \dots & Z_1^{S-1} &
      \dots & 
      Z_{m-1} & \dots & Z_{m-1}^{S-1} &
      z_1 & \dots & z_1^{S-1} &
      Z_{m+1} & \dots & Z_{m+1}^{S-1} &
      \dots & 
      Z_M & \dots & Z_M^{S-1} \\
      Z_1 & \dots & Z_1^{S-1} &
      \dots & 
      Z_{m-1} & \dots & Z_{m-1}^{S-1} &
      z_2 & \dots & z_2^{S-1} &
      Z_{m+1} & \dots & Z_{m+1}^{S-1} &
      \dots & 
      Z_M & \dots & Z_M^{S-1} \\
      \vdots & \vdots & \vdots &
      \vdots &
      \vdots & \vdots & \vdots &
      \vdots & \vdots & \vdots &
      \vdots & \vdots & \vdots &
      \vdots &
      \vdots & \vdots & \vdots \\      
      Z_1 & \dots & Z_1^{S-1} &
      \dots & 
      Z_{m-1} & \dots & Z_{m-1}^{S-1} &
      z_S & \dots & z_S^{S-1} &
      Z_{m+1} & \dots & Z_{m+1}^{S-1} &
      \dots & 
      Z_M & \dots & Z_M^{S-1}
    \end{pmatrix}
    \begin{pmatrix}
      w_{1,1}\\\vdots\\w_{1,S-1}\\ \vdots \\ w_{m,1}\\\vdots\\w_{m,S-1}\\\vdots \\w_{M,1}\\\vdots\\w_{M,S-1}
    \end{pmatrix} \notag \\
    & = \begin{pmatrix}
      Z_1 & \dots & Z_1^{S-1} &
      \dots & 
      Z_{m-1} & \dots & Z_{m-1}^{S-1} &
      Z_{m+1} & \dots & Z_{m+1}^{S-1} &
      \dots & 
      Z_M & \dots & Z_M^{S-1} &
      z_1 & \dots & z_1^{S-1} \\
      Z_1 & \dots & Z_1^{S-1} &
      \dots & 
      Z_{m-1} & \dots & Z_{m-1}^{S-1} &
      Z_{m+1} & \dots & Z_{m+1}^{S-1} &
      \dots & 
      Z_M & \dots & Z_M^{S-1} &
      z_2 & \dots & z_2^{S-1} \\
      \vdots & \vdots & \vdots &
      \vdots &
      \vdots & \vdots & \vdots &
      \vdots & \vdots & \vdots &
      \vdots & \vdots & \vdots &
      \vdots &
      \vdots & \vdots & \vdots \\      
      Z_1 & \dots & Z_1^{S-1} &
      \dots & 
      Z_{m-1} & \dots & Z_{m-1}^{S-1} &
      Z_{m+1} & \dots & Z_{m+1}^{S-1} &
      \dots & 
      Z_M & \dots & Z_M^{S-1}&
      z_S & \dots & z_S^{S-1} 
    \end{pmatrix}
    \begin{pmatrix}
      w_{1,1}\\\vdots\\w_{1,S-1}\\ \vdots \\ w_{M,1}\\\vdots\\w_{M,S-1} \\ w_{m,1}\\\vdots\\w_{m,S-1}
    \end{pmatrix}.
  \end{align}
  }
Let 
\begin{align}
  &A \coloneq
  (
    Z_1 , \dots , Z_1^{S-1} ,
    \dots ,
    Z_{m-1} , \dots , Z_{m-1}^{S-1} ,
    Z_{m+1} , \dots , Z_{m+1}^{S-1} ,
    \dots ,
    Z_M , \dots , Z_M^{S-1} 
  ),\\
  %&A \coloneq
  %\begin{pmatrix}
  %  Z_1 & \dots & Z_1^{S-1} &
  %  \dots & 
  %  Z_{m-1} & \dots & Z_{m-1}^{S-1} &
  %  Z_{m+1} & \dots & Z_{m+1}^{S-1} &
  %  \dots & 
  %  Z_M & \dots & Z_M^{S-1} 
  %\end{pmatrix},\\
  &B \coloneq
  \begin{pmatrix}
    w_{1,1}\\\vdots\\w_{1,S-1}\\ \vdots \\ w_{M,1}\\\vdots\\w_{M,S-1}
  \end{pmatrix} ,
\end{align}  
then the responses can be rewritten as follows:
\begin{align}
  \begin{pmatrix}
    X_{1}\\
    X_{2}\\ 
    \vdots \\
    X_{S}    
  \end{pmatrix}
  =
  \begin{pmatrix}
    A & z_1 & \dots & z_1^{S-1} \\
    A & z_2 & \dots & z_2^{S-1} \\
    \vdots & \vdots & \vdots & \vdots \\      
    A & z_S & \dots & z_S^{S-1} 
  \end{pmatrix}
  \begin{pmatrix}
    B \\w_{m,1}\\\vdots\\w_{m,S-1}
  \end{pmatrix}. 
\end{align}
Let $\beta \coloneq AB$, then the following equation holds:
\begin{align}
  \begin{pmatrix}
    X_{1}\\
    X_{2}\\ 
    \vdots \\
    X_{S}    
  \end{pmatrix}
  =
  \begin{pmatrix}
    1 & z_1 & \dots & z_1^{S-1} \\
    1 & z_2 & \dots & z_2^{S-1} \\
    \vdots & \vdots & \vdots & \vdots \\      
    1 & z_S & \dots & z_S^{S-1} 
  \end{pmatrix}
  \begin{pmatrix}
    \beta \\w_{m,1}\\\vdots\\w_{m,S-1}
  \end{pmatrix} .
\end{align}
Therefore, from the regularity of Vandermonde square matrices, 
the following equations hold:
\begin{align}
  \begin{pmatrix}
    1 & z_1 & \dots & z_1^{S-1} \\
    1 & z_2 & \dots & z_2^{S-1} \\
    \vdots & \vdots & \vdots & \vdots \\      
    1 & z_S & \dots & z_S^{S-1} 
  \end{pmatrix}^{-1}
  \begin{pmatrix}
    X_{1}\\
    X_{2}\\ 
    \vdots \\
    X_{S}    
  \end{pmatrix}
  =
  \begin{pmatrix}
    \beta \\w_{m,1}\\\vdots\\w_{m,S-1}
  \end{pmatrix} .\label{inv}
\end{align}
To further exclude $\beta$, (\ref{inv}) is transformed as follows:
\begin{align}
  \begin{pmatrix}
    0 & 1 & 0 & \dots & 0 \\
    0 & 0 & 1 & \dots & 0 \\
    \vdots & \vdots & 0 & \ddots & \vdots\\
    0 & 0 & 0 & \dots & 1   
  \end{pmatrix}
  \begin{pmatrix}
    1 & z_1 & z_1^2 & \dots & z_1^{S-1} \\ 
    1 & z_2 & z_2^2 & \dots & z_2^{S-1} \\
    \vdots & \vdots & \vdots & \ddots & \vdots \\
    1 & z_S & z_S^2 & \dots & z_S^{S-1}
  \end{pmatrix}^{-1}
  \begin{pmatrix}
    X_{1}\\
    X_{2}\\ 
    \vdots \\
    X_{S}    
  \end{pmatrix}
  =
  \begin{pmatrix}
    w_{m,1}\\\vdots\\w_{m,S-1}
  \end{pmatrix} .
\end{align}
\end{proof}

\subsection{Privacy}
Assume $m=1$, and for $j\in [2:S]$, let
\begin{align}
  Q_j^{(m)}=(z_j, \dots, z_j^{S-1}, Z_2, \dots, Z_2^{S-1}, \dots, Z_M, \dots, Z_M^{S-1}).
\end{align}
Note that $z_j \in \mathbb{F}_S \setminus \{z_1\}$. So $(Q_2^{(m)},\dots,Q_S^{(m)})$ has $(S-1)!$ possible realizations.
Similarly, for all $m \in [1:M]$, there are $(S-1)!$ possible realizations each, for a total of $M(S-1)!$.
Therefore, the following equations hold.
\begin{align}
   &|\{Q_{1:S}^{(m)} | Q_{j}^{(m)} = q_j\}|= (S-1)!,
\\ &|\{Q_{1:S}^{(\theta)} | Q_{j}^{(\theta)} = q_j\}|= M(S-1)!.
\end{align}
So the following privacy condition is satisfied. 
\begin{align}
  |\{Q_{1:S}^{(\theta)} | Q_{j}^{(\theta)} = q_j\}|=M|\{Q_{1:S}^{(m)} | Q_{j}^{(m)} = q_j\}|.
  % H(Q_{\bar{j}}^{(\theta)} | Q_j^{(\theta)}=q_j) = \log M + H(Q_{\bar{j}}^{(m)} | Q_j^{(m)}=q_j) .
\end{align}

\subsection{Capacity}
First, for (\ref{enum:m8}), the queries without the submatrix corresponding to index $m$ are as follows:
\begin{align}
  Q_1^{(m)}=
  &(
    Z_1 , \dots , Z_1^{S-1} ,
    \dots ,
    Z_{m-1} , \dots , Z_{m-1}^{S-1} ,
    Z_{m+1} , \dots , Z_{m+1}^{S-1} ,
    \dots ,
    Z_M , \dots , Z_M^{S-1}
  ), \notag  \\
  Q_2^{(m)}=
  &(
    Z_1 , \dots , Z_1^{S-1} ,
    \dots , 
    Z_{m-1} , \dots , Z_{m-1}^{S-1} ,
    Z_{m+1} , \dots , Z_{m+1}^{S-1} ,
    \dots ,
    Z_M , \dots , Z_M^{S-1}
  ), \notag \\
    &\vdots \notag \\
  Q_S^{(m)}=
  &(
    Z_1 , \dots , Z_1^{S-1} ,
    \dots ,
    Z_{m-1} , \dots , Z_{m-1}^{S-1} ,
    Z_{m+1} , \dots , Z_{m+1}^{S-1} ,
    \dots ,
    Z_M , \dots , Z_M^{S-1}
  ). 
  %Q_1^{(m)}[\bar{m}]=
  %&\begin{pmatrix}
  %  Z_1 & \dots & Z_1^{S-1} &
  %  \dots & 
  %  Z_{m-1} & \dots & Z_{m-1}^{S-1} &
  %  Z_{m+1} & \dots & Z_{m+1}^{S-1} &
  %  \dots & 
  %  Z_M & \dots & Z_M^{S-1}
  %\end{pmatrix}, \notag \\
  %Q_2^{(m)}[\bar{m}]=
  %&\begin{pmatrix}
  %  Z_1 & \dots & Z_1^{S-1} &
  %  \dots & 
  %  Z_{m-1} & \dots & Z_{m-1}^{S-1} &
  %  Z_{m+1} & \dots & Z_{m+1}^{S-1} &
  %  \dots & 
  %  Z_M & \dots & Z_M^{S-1}
  %\end{pmatrix}, \notag \\
  %&\vdots \notag \\
  %Q_S^{(m)}[\bar{m}]=
  %&\begin{pmatrix}
  %  Z_1 & \dots & Z_1^{S-1} &
  %  \dots & 
  %  Z_{m-1} & \dots & Z_{m-1}^{S-1} &
  %  Z_{m+1} & \dots & Z_{m+1}^{S-1} &
  %  \dots & 
  %  Z_M & \dots & Z_M^{S-1}
  %\end{pmatrix}. 
\end{align}
Therefore, the rank of the queries is as follows:
\begin{align}
  \mathrm{rank}
\begin{pmatrix}
  Q_1^{(m)}[\bar{m}]\\
  Q_2^{(m)}[\bar{m}]\\
  \vdots\\
  Q_S^{(m)}[\bar{m}]
\end{pmatrix}
=
\begin{cases}
  0 \quad (\mathrm{if\ one\ of\ } Q_j^{(m)}[\bar{m}] \mathrm{\ is\ a\ zero\ vector})\\ 
  1 \quad (\mathrm{otherwise})
\end{cases}.  
\end{align}
Note that
\begin{align}
  \mathrm{rank}\ Q_1^{(m)}[\bar{m}]=\dots=  \mathrm{rank}\ Q_S^{(m)}[\bar{m}]=
  \begin{cases}
    0 \quad (\mathrm{if\ one\ of\ } Q_j^{(m)}[\bar{m}] \mathrm{\ is\ a\ zero\ vector})\\ 
    1 \quad (\mathrm{otherwise})
  \end{cases} .
\end{align}
Thus, (\ref{enum:m8}) is satisfied. Next, for all $\mathcal{I}$ which is the complement of $\bar{\mathcal{I}} \subset [1:M]\setminus \{m\}$, the queries are as follows:
\begin{align}
  Q_1^{(m)}[\mathcal{I}]=
  &(
    Z_{i_1} , \dots , Z_{i_1}^{S-1} ,
    \dots , 
    z_1 , \dots , z_1^{S-1} ,
    \dots , 
    Z_{i_n} , \dots , Z_{i_n}^{S-1}
  ), \notag\\
  Q_2^{(m)}[\mathcal{I}]=
  &(
    Z_{i_1} , \dots , Z_{i_1}^{S-1} ,
    \dots , 
    z_2 , \dots , z_2^{S-1} ,
    \dots , 
    Z_{i_n} , \dots , Z_{i_n}^{S-1}
  ), \notag \\
  &\vdots \notag \\
  Q_S^{(m)}[\mathcal{I}]=
  &(
    Z_{i_1} , \dots , Z_{i_1}^{S-1} ,
    \dots ,
    z_S , \dots , z_S^{S-1} ,
    \dots , 
    Z_{i_n} , \dots , Z_{i_n}^{S-1}
  ).
  %Q_1^{(m)}[\mathcal{I}]=
  %&\begin{pmatrix}
  %  Z_{i_1} & \dots & Z_{i_1}^{S-1} &
  %  \dots & 
  %  z_1 & \dots & z_1^{S-1} &
  %  \dots & 
  %  Z_{i_n} & \dots & Z_{i_n}^{S-1}
  %\end{pmatrix}, \notag\\
  %Q_2^{(m)}[\mathcal{I}]=
  %&\begin{pmatrix}
  %  Z_{i_1} & \dots & Z_{i_1}^{S-1} &
  %  \dots & 
  %  z_2 & \dots & z_2^{S-1} &
  %  \dots & 
  %  Z_{i_n} & \dots & Z_{i_n}^{S-1}
  %\end{pmatrix}, \notag \\
  %&\vdots \notag \\
  %Q_S^{(m)}[\mathcal{I}]=
  %&\begin{pmatrix}
  %  Z_{i_1} & \dots & Z_{i_1}^{S-1} &
  %  \dots & 
  %  z_S & \dots & z_S^{S-1} &
  %  \dots & 
  %  Z_{i_n} & \dots & Z_{i_n}^{S-1}
  %\end{pmatrix} .
\end{align}
So clearly, the following equations hold:
\begin{align}
  \mathrm{rank}
\begin{pmatrix}
  Q_1^{(m)}[\mathcal{I}]\\
  Q_2^{(m)}[\mathcal{I}]\\
  \vdots\\
  Q_S^{(m)}[\mathcal{I}]
\end{pmatrix}
=
\begin{cases}
S-1 \quad (\mathrm{if\ one\ of\ }Q_j^{(m)}[\mathcal{I}]\mathrm{\ is\ a\ zero\ vector})\\ 
S \quad (\mathrm{otherwise})
\end{cases}
\end{align}
Therefore, (\ref{enum:m10}) is satisfied. 

\section{Colluding PIR}
\subsection{Conditions for Correctness}
The correctness condition is the same as for the standard PIR: there exists a matrix $D_m$ and the following equation is satisfied:
\begin{align}
    D_m \begin{pmatrix}Q_1^{(m)}\\\vdots \\ Q_S^{(m)}\end{pmatrix} = 
    \begin{pmatrix}
      \mathbf{O}_{(m-1)L_w \times L_w}\\
      \mathbf{E}\\
      \mathbf{O}_{(M-m)L_w \times L_w}
     \end{pmatrix}^{\top}.
  \end{align}
\subsection{Conditions for Privacy}
\begin{theorem}[Conditions for Privacy of colluding PIR]
    Let $\mathcal{T}$ be the index set of $T$ colluding servers, 
    the conditions for privacy are given as follows:
    \begin{align}
      &\mathrm{Pr}\{Q_{\mathcal{T}}^{(\theta)}=q | \theta=m\} = \mathrm{Pr}\{Q_{\mathcal{T}}^{(\theta)}=q | \theta=m^{\prime}\}
      \\& \Leftrightarrow |\{Q_{1:S}^{(\theta)} | Q_{\mathcal{T}}^{(\theta)} = q_{\mathcal{T}}\}|=M|\{Q_{1:S}^{(m)} | Q_{\mathcal{T}}^{(m)} = q_{\mathcal{T}}\}|. \notag
    \end{align}
    \end{theorem}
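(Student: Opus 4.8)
The plan is to mirror the proof of Theorem 1 essentially verbatim, after noticing that that argument never used the fact that $j$ is a \emph{single} server index. The only structural ingredients it exploited were: (i) $(Q_j^{(\theta)},Q_{\bar j}^{(\theta)})$ jointly recovers the full query vector $Q_{1:S}^{(\theta)}$; (ii) $\theta$ is uniform on $[1:M]$; (iii) $H(\theta\mid Q_{1:S}^{(\theta)})=0$, i.e.\ Lemma 1; and (iv) $H(Q_{1:S}^{(m)}\mid Q_j^{(m)}=q_j)$ does not depend on $m$. So I would set $\bar{\mathcal{T}}\coloneq[1:S]\setminus\mathcal{T}$, note that $(Q_{\mathcal{T}}^{(\theta)},Q_{\bar{\mathcal{T}}}^{(\theta)})$ recovers $Q_{1:S}^{(\theta)}$, and replace every occurrence of $j$ by $\mathcal{T}$ and of $\bar j$ by $\bar{\mathcal{T}}$ throughout the chain of equivalences $(\ref{p1})$--$(\ref{sym})$.

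Concretely: starting from $\mathrm{Pr}\{Q_{\mathcal{T}}^{(\theta)}=q_{\mathcal{T}}\mid\theta=m\}=\mathrm{Pr}\{Q_{\mathcal{T}}^{(\theta)}=q_{\mathcal{T}}\mid\theta=m'\}$, apply Bayes' rule and the uniform prior on $\theta$ exactly as in $(\ref{p1})$ to reduce to $\mathrm{Pr}\{\theta=m\mid Q_{\mathcal{T}}^{(\theta)}=q_{\mathcal{T}}\}=1/M$, hence to $H(\theta\mid Q_{\mathcal{T}}^{(\theta)}=q_{\mathcal{T}})=\log M$. Since conditioning cannot raise an already-zero entropy, Lemma 1 gives $H(\theta\mid Q_{\bar{\mathcal{T}}}^{(\theta)},Q_{\mathcal{T}}^{(\theta)}=q_{\mathcal{T}})=0$, so the previous line equals $I(\theta;Q_{\bar{\mathcal{T}}}^{(\theta)}\mid Q_{\mathcal{T}}^{(\theta)}=q_{\mathcal{T}})=\log M$; expanding this mutual information in the other order yields $H(Q_{\bar{\mathcal{T}}}^{(\theta)}\mid Q_{\mathcal{T}}^{(\theta)}=q_{\mathcal{T}})=\log M+H(Q_{\bar{\mathcal{T}}}^{(\theta)}\mid\theta,Q_{\mathcal{T}}^{(\theta)}=q_{\mathcal{T}})$. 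Averaging over $\theta$ with the uniform prior and using that $H(Q_{1:S}^{(m)}\mid Q_{\mathcal{T}}^{(m)}=q_{\mathcal{T}})$ is the same for every $m$ collapses the average to a single term, giving $H(Q_{1:S}^{(\theta)}\mid Q_{\mathcal{T}}^{(\theta)}=q_{\mathcal{T}})=\log M+H(Q_{1:S}^{(m)}\mid Q_{\mathcal{T}}^{(m)}=q_{\mathcal{T}})$, the colluding analogue of $(\ref{ent})$.

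The last step, as in $(\ref{sym})$, turns entropies into cardinalities: because $Q_{1:S}^{(\theta)}$ is drawn uniformly from the realization set $\mathcal{Q}$, the conditional law of $Q_{1:S}^{(\theta)}$ given $Q_{\mathcal{T}}^{(\theta)}=q_{\mathcal{T}}$ is uniform on $\{Q_{1:S}^{(\theta)}\mid Q_{\mathcal{T}}^{(\theta)}=q_{\mathcal{T}}\}$, and likewise for fixed $\theta=m$; hence the two entropies above are $\log|\{Q_{1:S}^{(\theta)}\mid Q_{\mathcal{T}}^{(\theta)}=q_{\mathcal{T}}\}|$ and $\log|\{Q_{1:S}^{(m)}\mid Q_{\mathcal{T}}^{(m)}=q_{\mathcal{T}}\}|$, and the additive $\log M$ gap is exactly the claimed multiplicative factor $M$.

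I expect the only genuine work to be the auxiliary fact that $H(Q_{1:S}^{(m)}\mid Q_{\mathcal{T}}^{(m)}=q_{\mathcal{T}})$ is independent of $m$ — the colluding counterpart of $(\ref{asm})$ — which should follow from the colluding-privacy hypothesis together with the structural assumption $H(Q_{1:S}^{(\theta)}\mid\theta=m)=H(Q_{1:S}^{(\theta)}\mid\theta=m')$ already used in the standard case, by the same sequence of implications that produced $(\ref{asm})$ from the single-server privacy condition. Everything else is a transcription of the Theorem 1 argument with $\mathcal{T}$ in place of $j$; no new inequality or combinatorial identity is required.
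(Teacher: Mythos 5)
Your proposal is correct and follows exactly the paper's own route: the paper's proof of Theorem~2 is a line-by-line transcription of the Theorem~1 chain of equivalences with $j$ replaced by $\mathcal{T}$ and $\bar{j}$ by $\bar{\mathcal{T}}$, using Lemma~1 for the zero conditional entropy, the uniform prior for the averaging step, and uniformity of $Q_{1:S}^{(\theta)}$ on $\mathcal{Q}$ for the entropy-to-cardinality conversion. You also correctly flag the implicit reliance on the colluding analogue of the auxiliary equality around equation~(\ref{asm}), a point the paper leaves tacit.
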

    \begin{proof}
        \begin{align}
            &\mathrm{Pr}\{Q_{\mathcal{T}}^{(\theta)}=q | \theta=m\} = \mathrm{Pr}\{Q_{\mathcal{T}}^{(\theta)}=q | \theta=m^{\prime}\}
            \\ &\Leftrightarrow \mathrm{Pr}\{Q_{\mathcal{T}}^{(\theta)}=q_{\mathcal{T}}, \theta=m\} = \mathrm{Pr}\{Q_{\mathcal{T}}^{(\theta)}=q_{\mathcal{T}} | \theta=m^{\prime}\} \mathrm{Pr}\{\theta = m\}
            \\ &\Leftrightarrow \mathrm{Pr}\{Q_{\mathcal{T}}^{(\theta)}=q_{\mathcal{T}}, \theta=m\} = \mathrm{Pr}\{Q_{\mathcal{T}}^{(\theta)}=q_{\mathcal{T}} , \theta=m^{\prime}\} \mathrm{Pr}\{\theta = m\} / \mathrm{Pr}\{\theta=m^{\prime}\}
            \\ &\Leftrightarrow \mathrm{Pr}\{\theta=m | Q_{\mathcal{T}}^{(\theta)}=q_{\mathcal{T}}\} = \mathrm{Pr}\{\theta=m^{\prime} | Q_{\mathcal{T}}^{(\theta)}=q_{\mathcal{T}}\} \mathrm{Pr}\{\theta = m\} / \mathrm{Pr}\{\theta=m^{\prime}\}
            \\ &\Leftrightarrow \mathrm{Pr}\{\theta=m | Q_{\mathcal{T}}^{(\theta)}=q_{\mathcal{T}}\} = \mathrm{Pr}\{\theta=m^{\prime} | Q_{\mathcal{T}}^{(\theta)}=q_{\mathcal{T}}\}
            \\ &\Leftrightarrow \mathrm{Pr}\{\theta=m | Q_{\mathcal{T}}^{(\theta)}=q_{\mathcal{T}}\} = \frac{1}{M} 
            \\ &\Leftrightarrow H(\theta | Q_{\mathcal{T}}^{(\theta)}=q_{\mathcal{T}}) = \log M 
            \\ &\Leftrightarrow H(\theta | Q_{\mathcal{T}}^{(\theta)}=q_{\mathcal{T}}) - H(\theta | Q_{\bar{\mathcal{T}}}^{(\theta)},Q_{\mathcal{T}}^{(\theta)}=q_{\mathcal{T}})= \log M  
            \\ &\Leftrightarrow I(\theta;Q_{\bar{\mathcal{T}}}^{(\theta)} | Q_{\mathcal{T}}^{(\theta)}=q_{\mathcal{T}}) = \log M
            \\ &\Leftrightarrow H(Q_{\bar{\mathcal{T}}}^{(\theta)} | Q_{\mathcal{T}}^{(\theta)}=q_{\mathcal{T}}) - H(Q_{\bar{\mathcal{T}}}^{(\theta)} | \theta ,Q_{\mathcal{T}}^{(\theta)}=q_{\mathcal{T}}) = \log M
            \\ &\Leftrightarrow H(Q_{\bar{\mathcal{T}}}^{(\theta)} | Q_{\mathcal{T}}^{(\theta)}=q_{\mathcal{T}}) = \log M + H(Q_{\bar{\mathcal{T}}}^{(\theta)} | \theta ,Q_{\mathcal{T}}^{(\theta)}=q_{\mathcal{T}}) 
            \\ &\Leftrightarrow H(Q_{\bar{\mathcal{T}}}^{(\theta)} | Q_{\mathcal{T}}^{(\theta)}=q_{\mathcal{T}}) = \log M + \sum_{i=1}^M P(\theta=i) H(Q_{\bar{\mathcal{T}}}^{(\theta)} | \theta=i ,Q_{\mathcal{T}}^{(\theta)}=q_{\mathcal{T}}) 
            \\ &\Leftrightarrow H(Q_{\bar{\mathcal{T}}}^{(\theta)} | Q_{\mathcal{T}}^{(\theta)}=q_{\mathcal{T}}) = \log M + \sum_{i=1}^M \frac{1}{M}H(Q_{\bar{\mathcal{T}}}^{(\theta)} | \theta=i ,Q_{\mathcal{T}}^{(\theta)}=q_{\mathcal{T}}) 
            %\\ &\Leftrightarrow H(Q_{\bar{\mathcal{T}}}^{(\theta)} | Q_j^{(\theta)}=q_{\mathcal{T}}) = \log M + \sum_{i=1}^M \frac{1}{M} (H(Q_{1:S}^{(\theta)} | \theta=i) - H(Q_{j}^{(\theta)} | \theta=i) )
            %\\ &\Leftrightarrow H(Q_{\bar{\mathcal{T}}}^{(\theta)} | Q_j^{(\theta)}=q_{\mathcal{T}}) = \log M + H(Q_{1:S}^{(\theta)} | \theta=m) - H(Q_{j}^{(\theta)} | \theta=m) \label{p4} 
            \\ &\Leftrightarrow H(Q_{\bar{\mathcal{T}}}^{(\theta)} | Q_{\mathcal{T}}^{(\theta)}=q_{\mathcal{T}}) = \log M + H(Q_{\bar{\mathcal{T}}}^{(\theta)} | \theta=m ,Q_{\mathcal{T}}^{(\theta)}=q_{\mathcal{T}}) 
            \\ &\Leftrightarrow H(Q_{\bar{\mathcal{T}}}^{(\theta)} | Q_{\mathcal{T}}^{(\theta)}=q_{\mathcal{T}}) = \log M + H(Q_{\bar{\mathcal{T}}}^{(m)} | Q_{\mathcal{T}}^{(m)}=q_{\mathcal{T}}) 
            \\ &\Leftrightarrow |\{Q_{1:S}^{(\theta)} | Q_{\mathcal{T}}^{(\theta)} = q_{\mathcal{T}}\}|=M|\{Q_{1:S}^{(m)} | Q_{\mathcal{T}}^{(m)} = q_{\mathcal{T}}\}|.
            %\\ &\Leftrightarrow |\mathcal{Q}_{\bar{\mathcal{T}}}^{(\theta)}|=M|\mathcal{Q}_{\bar{\mathcal{T}}}^{(m)}|
            %\\ &\Leftrightarrow \frac{|\mathcal{Q}_{1:S}^{(\theta)}|}{|\mathcal{Q}_{\mathcal{T}}^{(\theta)}|}=\frac{M|\mathcal{Q}_{1:S}^{(m)}|}{|\mathcal{Q}_{\mathcal{T}}^{(m)}|}
            %\\ &\Leftrightarrow |\mathcal{Q}_{1:S}^{(\theta)}|=M|\mathcal{Q}_{1:S}^{(m)}|.
           \end{align}
\end{proof}

\subsection{Conditions for Capacity Achievability}
Assume that there are $T$ colluding servers, the capacity is given as follows \cite{8119895}:
\begin{theorem}[Conditions for Capacity of colluding PIR]  
     \begin{align}
        \mathcal{R} \leq \left(1+\frac{T}{S}+\frac{T^2}{S^2}+\frac{T^3}{S^3}+\dots+\frac{T^{M-1}}{S^{M-1}}\right)^{-1}=\frac{1-\frac{T}{S}}{1-(\frac{T}{S})^M}.
     \end{align}
    \end{theorem}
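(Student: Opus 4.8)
The plan is to reproduce the two‑lemma information‑theoretic converse that gave the standard PIR capacity (Lemmas~2 and~3), but with single servers replaced by arbitrary $T$‑element subsets $\mathcal{T}\subseteq[1:S]$ and with the colluding‑privacy hypothesis used in place of single‑server privacy. Throughout, $\pi$ denotes a random permutation of $[1:M]$ and $\bar{\mathcal{T}}\coloneq[1:S]\setminus\mathcal{T}$. First I note that the Interference Lower Bound Lemma carries over verbatim: its proof invokes only correctness, the determinism of the answer map, the independence $W_{1:M}\perp Q_{1:S}^{(\pi[1])}$, and subadditivity of entropy, and never uses privacy, so $\sum_{j=1}^S H(X_j^{(\pi[1])}\mid Q_{1:S}^{(\pi[1])})-L_w\ge I(W_{\pi[2:M]};Q_{1:S}^{(\pi[1])},X_{1:S}^{(\pi[1])}\mid W_{\pi[1]})$.

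The heart of the argument is a colluding version of the Induction Lemma: for $2\le m\le M$,
\[
I\!\left(W_{\pi[m:M]};Q_{1:S}^{(\pi[m-1])},X_{1:S}^{(\pi[m-1])}\mid W_{\pi[1:m-1]}\right)\ \ge\ \frac{T}{S}L_w+\frac{T}{S}I\!\left(W_{\pi[m+1:M]};Q_{1:S}^{(\pi[m])},X_{1:S}^{(\pi[m])}\mid W_{\pi[1:m]}\right).
\]
I would derive this by following the chain of (in)equalities in the proof of Lemma~3 with two changes. (i) In place of the average $\tfrac1S\sum_{j}$ over single servers, lower‑bound the left side by $\binom{S}{T}^{-1}\sum_{\mathcal{T}:|\mathcal{T}|=T}I(W_{\pi[m:M]};Q_{\mathcal{T}}^{(\pi[m-1])},X_{\mathcal{T}}^{(\pi[m-1])}\mid W_{\pi[1:m-1]})$, valid because each summand is at most the left side. (ii) Apply colluding privacy, $Q_{\mathcal{T}}^{(\pi[m-1])}\overset{d}{=}Q_{\mathcal{T}}^{(\pi[m])}$: since the answer map is fixed and $W_{1:M}$ is independent of all queries, $(W_{1:M},Q_{\mathcal{T}}^{(\pi[m-1])},X_{\mathcal{T}}^{(\pi[m-1])})$ and $(W_{1:M},Q_{\mathcal{T}}^{(\pi[m])},X_{\mathcal{T}}^{(\pi[m])})$ have the same joint law, so each summand equals $I(W_{\pi[m:M]};Q_{\mathcal{T}}^{(\pi[m])},X_{\mathcal{T}}^{(\pi[m])}\mid W_{\pi[1:m-1]})$; then, since $Q_{\bar{\mathcal{T}}}$ is independent of $W$ given $Q_{\mathcal{T}}$ while $X_{\mathcal{T}}$ is a function of $(Q_{\mathcal{T}},W_{1:M})$, this equals $H(X_{\mathcal{T}}^{(\pi[m])}\mid Q_{1:S}^{(\pi[m])},W_{\pi[1:m-1]})$. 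Expanding $H(X_{\mathcal{T}}^{(\pi[m])}\mid\cdot)$ by the chain rule inside $\mathcal{T}$ and over‑conditioning each term on the full prefix $X_{1:j-1}^{(\pi[m])}$, the average is at least $\binom{S}{T}^{-1}\binom{S-1}{T-1}\sum_{j=1}^S H(X_j^{(\pi[m])}\mid Q_{1:S}^{(\pi[m])},W_{\pi[1:m-1]},X_{1:j-1}^{(\pi[m])})$; since $\binom{S-1}{T-1}/\binom{S}{T}=T/S$, the chain rule collapses this to $\tfrac{T}{S}H(X_{1:S}^{(\pi[m])}\mid Q_{1:S}^{(\pi[m])},W_{\pi[1:m-1]})$. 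Finally, subtract the vanishing term $\tfrac{T}{S}H(X_{1:S}^{(\pi[m])}\mid Q_{1:S}^{(\pi[m])},W_{1:M})=0$, recognize a conditional mutual information, and split off $W_{\pi[m]}$, whose contribution is $L_w$ by correctness; the residual is exactly $\tfrac{T}{S}I(W_{\pi[m+1:M]};\cdots\mid W_{\pi[1:m]})$.

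To finish, I would chain the two lemmas: starting from the interference bound and applying the colluding Induction Lemma successively for $m=2,\dots,M$ (the last residual $I(W_{\pi[M+1:M]};\cdots)$ is $0$), one obtains $\sum_{j=1}^S H(X_j^{(\pi[1])}\mid Q_{1:S}^{(\pi[1])})\ge L_w\sum_{k=0}^{M-1}(T/S)^k=L_w\frac{1-(T/S)^M}{1-T/S}$. Because conditioning cannot increase entropy, $H(X_i^{(m)}\mid Q_i^{(m)})\ge H(X_i^{(m)}\mid Q_{1:S}^{(m)})$, so taking $\pi$ with $\pi[1]=m$ gives $\sum_{i=1}^S H(X_i^{(m)}\mid Q_i^{(m)})\ge L_w\frac{1-(T/S)^M}{1-T/S}$ for every $m\in[1:M]$; inserting this and $H(W_m)=L_w$ into the definition of $\mathcal{R}$ yields $\mathcal{R}\le\frac{1-T/S}{1-(T/S)^M}=(1+\tfrac{T}{S}+\cdots+\tfrac{T^{M-1}}{S^{M-1}})^{-1}$.

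The step I expect to be the main obstacle is the induction lemma: one must check carefully that $Q_{\bar{\mathcal{T}}}$ can be removed from, and reinserted into, the conditioning without changing the relevant conditional mutual informations; that colluding privacy — not merely single‑server privacy — is exactly what licenses the index swap $\pi[m-1]\mapsto\pi[m]$ on the whole block $Q_{\mathcal{T}}$ simultaneously; and that the prefix‑conditioning trick together with the count $\binom{S-1}{T-1}$ of size‑$T$ subsets through a fixed server yields precisely the ratio $T/S$ and not something weaker. Once this lemma is established, the remainder is a mechanical transcription of the standard‑PIR converse with $1/S$ replaced by $T/S$.
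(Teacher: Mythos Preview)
Your proposal is correct. The Interference Lower Bound Lemma indeed carries over verbatim, and your colluding Induction Lemma is sound: the averaging over $T$-subsets, the index swap via $T$-privacy, the reduction to $H(X_{\mathcal{T}}^{(\pi[m])}\mid Q_{1:S}^{(\pi[m])},W_{\pi[1:m-1]})$, and the prefix-conditioning combinatorics giving $\binom{S-1}{T-1}/\binom{S}{T}=T/S$ all go through as you outline.

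However, this is a genuinely different route from the paper's. The paper does \emph{not} reuse the two-lemma architecture; instead it runs a single direct converse: it defines $\mathcal{H}_T=\binom{S}{T}^{-1}\sum_{\mathcal{T}}H(X_{\mathcal{T}}^{(\pi[1])}\mid\mathcal{Q})/T$, starts from $ML_w=H(W_{1:M}\mid\mathcal{Q})$, applies a Han-type inequality to pass to $S\mathcal{H}_T$, and then unwinds a recursion that repeatedly splits off one message, averages over $\mathcal{T}$, drops one server from $\mathcal{T}$ to a $(T-1)$-subset $\mathcal{T}'$, and recombines, eventually arriving at $S\mathcal{H}_T\ge L_w\sum_{k=0}^{M-1}(T/S)^k$. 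Your argument is shorter, more modular, and makes the appearance of $T/S$ transparent via the counting identity; the paper's longer chain, on the other hand, produces a richer list of intermediate inequalities whose equality cases the paper immediately exploits to extract the rank conditions for capacity-achieving schemes. If you were only after the bound, your approach is the cleaner one; if you also need to mirror the paper's subsequent equality-condition analysis, be aware that your proof yields a different (though ultimately equivalent) set of tightness conditions.
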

   \begin{proof}
    Let $\mathcal{Q}$ be the query realization set $\{Q_{1:S}^{(m)} | m \in [1:M]\}$,
    \begin{align}
        \mathcal{T} \subset [1:S], |\mathcal{T}|=T, 
        \mathcal{H}_T \coloneq \frac{1}{\begin{pmatrix}S\\T\end{pmatrix}} \sum_{\mathcal{T}} \frac{H(X_{\mathcal{T}}^{(\pi[1])}|\mathcal{Q})}{T}.
    \end{align}
    Then, the following equations hold.\\
    \begin{align}
    &M L_w=H(W_1,\dots,W_M|\mathcal{Q})\\
    &= H(X_{\mathcal{T}}^{(\pi[1])},X_{\bar{\mathcal{T}}}^{(\pi[1])},\dots,X_{\bar{\mathcal{T}}}^{(\pi[M])}|\mathcal{Q}) \\ %\quad (\because H(X_{j}^{(\pi[m])}|\mathcal{Q},W_{1:M})=0,H(W_{1:M}|\mathcal{Q},X_{j}^{(\pi[m])})=0)\\
    &= H(X_{\mathcal{T}}^{(\pi[1])},X_{\bar{\mathcal{T}}}^{(\pi[1])}| \mathcal{Q}) + H(X_{\bar{\mathcal{T}}}^{(\pi[2])},\dots,X_{\bar{\mathcal{T}}}^{(\pi[M])}|X_{\mathcal{T}}^{(\pi[1])},X_{\bar{\mathcal{T}}}^{(\pi[1])},\mathcal{Q})\\
    &\leq S\mathcal{H}_T + H(X_{\bar{\mathcal{T}}}^{(\pi[2])},\dots,X_{\bar{\mathcal{T}}}^{(\pi[M])}|X_{\mathcal{T}}^{(\pi[1])},X_{\bar{\mathcal{T}}}^{(\pi[1])},\mathcal{Q}) \label{han}\\
    &\leq S\mathcal{H}_T + H(X_{\bar{\mathcal{T}}}^{(\pi[2])},\dots,X_{\bar{\mathcal{T}}}^{(\pi[M])}|X_{\mathcal{T}}^{(\pi[1])},W_{\pi[1]},\mathcal{Q}) \label{con0}\\
    &= S\mathcal{H}_T + H(X_{\bar{\mathcal{T}}}^{(\pi[2])}|X_{\mathcal{T}}^{(\pi[1])},W_{\pi[1]},\mathcal{Q}) + H(X_{\bar{\mathcal{T}}}^{(\pi[3])},\dots,X_{\bar{\mathcal{T}}}^{(\pi[M])}|X_{\mathcal{T}}^{(\pi[1])},X_{\mathcal{T}}^{(\pi[2])},W_{\pi[1]},\mathcal{Q}) \label{ind1}\\
    &\leq S\mathcal{H}_T + \sum_{n \in \bar{\mathcal{T}}} H(X_{n}^{(\pi[2])}|X_{\mathcal{T}}^{(\pi[1])},W_{\pi[1]},\mathcal{Q})+ H(X_{\mathcal{T}}^{(\pi[1])},X_{\bar{\mathcal{T}}}^{(\pi[3])},\dots,X_{\bar{\mathcal{T}}}^{(\pi[M])}|W_{\pi[1]},W_{\pi[2]},\mathcal{Q})  \label{ind2} \\ \notag
    & \quad -H(X_{\mathcal{T}}^{(\pi[1])}|W_{\pi[1]},W_{\pi[2]},\mathcal{Q})\\
    &= S\mathcal{H}_T + \sum_{n \in \bar{\mathcal{T}}} H(X_{n}^{(\pi[2])}|X_{\mathcal{T}}^{(\pi[1])},W_{\pi[1]},\mathcal{Q})+ (M-2) L_w -H(X_{\mathcal{T}}^{(\pi[1])}|W_{\pi[1]},W_{\pi[2]},\mathcal{Q})  \\ 
    &\Rightarrow ML_w-(M-2)L_w + \frac{1}{\begin{pmatrix}S\\T\end{pmatrix}} \sum_{\mathcal{T}} H(X_{\mathcal{T}}^{(\pi[1])}|W_{\pi[1]},W_{\pi[2]},\mathcal{Q}) \\
    &\leq S\mathcal{H}_T + \frac{1}{\begin{pmatrix}S\\T\end{pmatrix}} \sum_{\mathcal{T}} \sum_{n \notin \mathcal{T}} H(X_n^{(\pi[2])} | X_{\mathcal{T}}^{(\pi[1])},W_{\pi[1]},\mathcal{Q}) \label{ave2}\\
    &\leq S\mathcal{H}_T + \frac{1}{\begin{pmatrix}S\\T\end{pmatrix}} \sum_{\mathcal{T}} \sum_{n \notin \mathcal{T}} \sum_{\mathcal{T}^{\prime},|\mathcal{T}^{\prime}|=T-1}  \frac{H(X_n^{(\pi[2])} | X_{\mathcal{T}^{\prime}}^{(\pi[1])},W_{\pi[1]},\mathcal{Q})}{T} \label{ave1}\\
    %&= S\mathcal{H}_T + \frac{1}{\begin{pmatrix}S\\T\end{pmatrix}} \sum_{\mathcal{T}^{\prime}} \sum_{n \notin \mathcal{T}^{\prime}} (S-T) \frac{H(X_n^{(\pi[2])} | X_{\mathcal{T}^{\prime}}^{(\pi[1])},W_{\pi[1]},\mathcal{Q})}{T}\\
    %&= S\mathcal{H}_T + \frac{1}{\begin{pmatrix}S\\T\end{pmatrix}} \sum_{\mathcal{T}} \sum_{n \in \mathcal{T}} (S-T) \frac{H(X_n^{(\pi[2])} | X_{\mathcal{T}\setminus\{n\}},W_{\pi[1]},\mathcal{Q})}{T} \quad (\because \mathcal{T} = \mathcal{T}^{\prime} \cup \{n\})\\
    &\leq S\mathcal{H}_T + \frac{1}{\begin{pmatrix}S\\T\end{pmatrix}} \sum_{\mathcal{T}} (S-T) \frac{H(X_{\mathcal{T}}^{(\pi[2])} | W_{\pi[1]},\mathcal{Q})}{T} \label{ind3}\\
    &\leq S\mathcal{H}_T + \left(\frac{S}{T}-1 \right) \frac{1}{\begin{pmatrix}S\\T\end{pmatrix}} \sum_{\mathcal{T}} H(X_{\mathcal{T}}^{(\pi[2])}, X_{\bar{\mathcal{T}}}^{(\pi[1])}| W_{\pi[1]},\mathcal{Q}) \label{ind4}\\
    &= S\mathcal{H}_T + \left(\frac{S}{T}-1 \right) (H(X_{1:S}^{(\pi[2])},W_{\pi[1]}|\mathcal{Q})-H(W_{\pi[1]}|\mathcal{Q}))\\
    &= S\mathcal{H}_T + \left(\frac{S}{T}-1 \right) (H(X_{1:S}^{(\pi[2])},W_{\pi[1]}|\mathcal{Q})-H(W_{\pi[1]})) \\ %\quad (\because H(W_{\pi[1]} | X_{1:S}^{(\pi[1])}, \mathcal{Q})=0)\\
    &\leq S\mathcal{H}_T + \left(\frac{S}{T}-1 \right) (S\mathcal{H}_T -Lw) \\
    &= \frac{S^2}{T} \mathcal{H}_T - \left(\frac{S}{T}-1 \right) L_w \\
 &\Rightarrow S\mathcal{H}_T \geq L_w\left(1+\frac{T}{S} \right) + \frac{T^2}{S^2}S \frac{1}{\begin{pmatrix}S\\T\end{pmatrix}}\sum_{\mathcal{T}} \frac{H(X_{\mathcal{T}}^{(\pi[1])} | W_{\pi[1]},W_{\pi[2]},\mathcal{Q})}{T}\\
    &=L_w \left(1+\frac{T}{S} \right) + \frac{T^2}{S^2}S \mathcal{H}_{T}^{\prime}\\
 &\Rightarrow S\mathcal{H}_T \geq L_w \left(1+\frac{T}{S} + \dots + \frac{T^{M-1}}{S^{M-1}}\right).
\end{align}
\end{proof}
Note that $\mathcal{H}_{T}^{\prime} \coloneq \frac{1}{\begin{pmatrix}S\\T\end{pmatrix}}\sum_{\mathcal{T}} \frac{H(X_{\mathcal{T}}^{(\pi[1])} | W_{\pi[1]},W_{\pi[2]},\mathcal{Q})}{T}$.
The necessary and sufficient condition for achieving the capacity is that the inequality in the derivations of Theorem 3 is satisfied with equal signs.
Therefore, the following equations are the necessary and sufficient conditions for achieving the capacity:
    \begin{align}
        &\forall \mathcal{T},H(X_{1:S}^{(\pi[m])}|W_{\pi[1:m-1]},\mathcal{Q})=\frac{S}{\begin{pmatrix}S\\T\end{pmatrix}}\sum_{\mathcal{T}} \frac{H(X_{\mathcal{T}}^{(\pi[m])} | W_{\pi[1:m-1]},\mathcal{Q})}{T} \quad (\mathrm{Due\ to\ (\ref{han})}), \label{han2}\\
        &\forall \mathcal{T},H(X_{\bar{\mathcal{T}}}^{(\pi[m])}|X_{\mathcal{T}}^{(\pi[m-1])},W_{\pi[1:m-1]},\mathcal{Q})=\sum_{n \in \mathcal{\bar{T}}} H(X_n^{(\pi[m])} | X_{\mathcal{T}}^{(\pi[m-1])},W_{\pi[1:m-1]},\mathcal{Q})\quad (\mathrm{Due\ to\ (\ref{ind2})}),\\
        &\forall \mathcal{T},H(X_{\mathcal{T}}^{(\pi[m])}|W_{\pi[1:m+1]},\mathcal{Q}) = \frac{1}{\begin{pmatrix}S\\T\end{pmatrix}} \sum_{\mathcal{T}} H(X_{\mathcal{T}}^{(\pi[m+1])}|W_{\pi[1:m]},\mathcal{Q}) \quad(\mathrm{Due\ to\ (\ref{ave2})}),\\
        &\forall \mathcal{T},\forall n \notin \mathcal{T}, \forall \mathcal{T}^{\prime}, |\mathcal{T}^{\prime}|=T-1, \notag \\  & \quad H(X_n^{(\pi[m])} | X_{\mathcal{T}}^{(\pi[m-1])},W_{\pi[1:m-1]},\mathcal{Q})=H(X_n^{(\pi[m])}|X_{\mathcal{T}^{\prime}}^{(\pi[m-1])},W_{\pi[1:m-1]},\mathcal{Q}) \quad (\mathrm{Due\ to\ (\ref{ave1})}), \label{ind5}\\
        &\forall \mathcal{T},\forall n \in [1:S], \sum_{n\in \mathcal{T}} H(X_n^{(\pi[m])} | X_{\mathcal{T}\setminus \{n\}}^{(\pi[m-1])},W_{\pi[1:m-1]},\mathcal{Q}) = H(X_{\mathcal{T}}^{(\pi[m])}| W_{\pi[1:m-1]},\mathcal{Q}) \quad (\mathrm{Due\ to\ (\ref{ind3})}), \label{ind6}\\
        &\forall \mathcal{T}, H(X_{\mathcal{T}}^{(\pi[m])} | W_{\pi[1:m]},\mathcal{Q}) = H(X_{\mathcal{T}}^{(\pi[m])}, X_{\bar{\mathcal{T}}}^{(\pi[m])}| W_{\pi[1:m]},\mathcal{Q}) \quad (\mathrm{Due\ to\ (\ref{ind4})}), \label{dep} \\
        &H(X_{\bar{\mathcal{T}}}^{(\pi[\bar{m}])}|X_{\mathcal{T}}^{(\pi[m])},X_{\bar{\mathcal{T}}}^{(\pi[m])},\mathcal{Q}) = H(X_{\bar{\mathcal{T}}}^{(\pi[\bar{m}])}|X_{\mathcal{T}}^{(\pi[m])},W_{\pi[m]},\mathcal{Q}) \quad (\mathrm{Due\ to\ (\ref{con0})}). \label{ind7}
    \end{align}
    The following equations are the summary of the above equations:
    \begin{align}
        &\forall i,j\in [1:S], i\neq j ,\forall m \in [1:M], \forall \bar{\mathcal{I}}\subset [1:M]\setminus\{m\}, H(X_{j}^{(m)}| Q_{1:S}^{(m)}, W_{\bar{\mathcal{I}}},X_{i}^{(m)}) = H(X_{j}^{(m)}| Q_{1:S}^{(m)}, W_{\bar{\mathcal{I}}}) \label{sum1},
        \\&\forall \mathcal{T}, H(X_{\mathcal{T}}^{(m)} | W_{m},Q_{1:S}^{(m)}) = H(X_{\mathcal{T}}^{(m)}, X_{\bar{\mathcal{T}}}^{(m)}| W_{m},Q_{1:S}^{(m)}) \label{sum2}.
    \end{align}
\begin{proof}
When (\ref{ind5}) holds, the following equations also hold:
\begin{align}
    &\forall i,j \in [1:S], i \neq j ,I(X_i^{(\pi[m])};X_j^{(\pi[m])} | W_{\pi[1:m-1]},\mathcal{Q})=0
   \\& \Leftrightarrow  \forall j \in [1:S], \forall m \in [1:M], \forall \bar{\mathcal{I}}\subset [1:M]\setminus\{m\}, \notag
   \\& \quad H(X_{j}^{(m)}| Q_{1:S}^{(m)}, W_{\bar{\mathcal{I}}},X_{i}^{(m)}) = H(X_{j}^{(m)}| Q_{1:S}^{(m)}, W_{\bar{\mathcal{I}}}) \label{pi}.
\end{align}
(\ref{pi}) is because we can set $\pi[m]=m, \pi[1:m-1]=\bar{\mathcal{I}}$.
Due to (\ref{pi}), the equations (\ref{han2})-(\ref{ind6}) are all satisfied. Next, for (\ref{dep}), the following equations hold:
\begin{align}
    &\forall \mathcal{T}, H(X_{\mathcal{T}}^{(\pi[m])} | W_{\pi[1:m]},\mathcal{Q}) = H(X_{\mathcal{T}}^{(\pi[m])}, X_{\bar{\mathcal{T}}}^{(\pi[m])}| W_{\pi[1:m]},\mathcal{Q}) 
    \\& \Leftrightarrow \forall \mathcal{T}, H(X_{\mathcal{T}}^{(m)} | W_{m},Q_{1:S}^{(m)}) = H(X_{\mathcal{T}}^{(m)}, X_{\bar{\mathcal{T}}}^{(m)}| W_{m},Q_{1:S}^{(m)}) \label{pi2}.
\end{align}
(\ref{pi2}) is because we can set $\pi[1]=m$.
Therefore, (\ref{ind7}) is satisfied as follows:
\begin{align}
   &H(X_{\bar{\mathcal{T}}}^{(\pi[\bar{m}])}|X_{\mathcal{T}}^{(\pi[m])},X_{\bar{\mathcal{T}}}^{(\pi[m])},\mathcal{Q}) 
\\ & =H(X_{\bar{\mathcal{T}}}^{(\pi[\bar{m}])}|X_{\mathcal{T}}^{(\pi[m])},X_{\bar{\mathcal{T}}}^{(\pi[m])},W_{\pi[m]} ,\mathcal{Q}) 
\\ & =H(X_{\bar{\mathcal{T}}}^{(\pi[\bar{m}])},X_{\mathcal{T}}^{(\pi[m])},X_{\bar{\mathcal{T}}}^{(\pi[m])}|W_{\pi[m]} ,\mathcal{Q})-H(X_{\mathcal{T}}^{(\pi[m])}, X_{\bar{\mathcal{T}}}^{(\pi[m])}| W_{\pi[m]},Q_{1:S}^{(\pi[m])})
\\ & =H(X_{\bar{\mathcal{T}}}^{(\pi[\bar{m}])},X_{\mathcal{T}}^{(\pi[m])}|W_{\pi[m]} ,\mathcal{Q})-H(X_{\mathcal{T}}^{(\pi[m])}| W_{\pi[m]},Q_{1:S}^{(\pi[m])})
\\ & =H(X_{\bar{\mathcal{T}}}^{(\pi[\bar{m}])} | X_{\mathcal{T}}^{(\pi[m])}, W_{\pi[m]} ,\mathcal{Q}) .
\end{align}
\end{proof}
Let $Q_j^{(m)}=(Q_j^{(m)}[1],\dots,Q_j^{(m)}[M])$, the necessary and sufficient conditions for achieving the capacity can be rewritten as follows:
    \begin{align}
    &\forall j \in [1:S], \forall m \in [1:M], \forall \bar{\mathcal{I}}\subset [1:M]\setminus \{m\}, \mathrm{rank}\ Q_{1:S}^{(m)}[\mathcal{I}]=\sum_{j=1}^{S} \mathrm{rank}\ Q_j^{(m)}[\mathcal{I}], \label{con1}
    \\&\forall \mathcal{T} \subset [1:S], \forall m \in [1:M], \mathrm{rank}\ Q_{1:S}^{(m)}[\bar{m}]=\sum_{j\in \mathcal{T}} \mathrm{rank}\ Q_{j}^{(m)}[\bar{m}] . \label{con2}
    \end{align}
\begin{proof}
    For (\ref{con1}), the following equations hold:
   \begin{align}
    &H(X_{j}^{(m)}| Q_{1:S}^{(m)}, W_{\bar{\mathcal{I}}}) = \mathrm{rank}\ Q_{j}^{(m)} [\mathcal{I}] \log S ,
    \\ & H(X_{j}^{(m)}| Q_{1:S}^{(m)}, W_{\bar{\mathcal{I}}},X_{j}^{(m)}) = \mathrm{rank}\ Q_{\{i,j\}}^{(m)} [\mathcal{I}] \log S  - \mathrm{rank}\ Q_{j}^{(m)} [\mathcal{I}]  \log S.
   \end{align}
   %Let $W_{k} \in \mathbb{F}_S$ for all $k \in [1:M]$. 
   Due to (\ref{sum1}), the following equations hold:
   \begin{align}
    &\forall i,j \in [1:S], \mathrm{rank}\ Q_{\{i,j\}}^{(m)} [\mathcal{I}] \log S - \mathrm{rank}\ Q_{j}^{(m)} [\mathcal{I}] \log S = \mathrm{rank}\ Q_{i}^{(m)} [\mathcal{I}] \log S
    \\ & \Leftrightarrow \mathrm{rank}\ Q_{1:S}^{(m)}[\mathcal{I}] \log S=\sum_{j=1}^{S} \mathrm{rank}\ Q_j^{(m)}[\mathcal{I}] \log S
    \\& \Leftrightarrow  \mathrm{rank}\ Q_{1:S}^{(m)}[\mathcal{I}] =\sum_{j=1}^{S} \mathrm{rank}\ Q_j^{(m)}[\mathcal{I}].
   \end{align}
   Next, for (\ref{con2}), the following equations hold:
   \begin{align}
    &H(X_{\mathcal{T}}^{(m)} | W_{m},Q_{1:S}^{(m)}) = \sum_{j\in \mathcal{T}} \mathrm{rank}\ Q_{j}^{(m)}[\bar{m}]\log S ,
    \\ &H(X_{\mathcal{T}}^{(m)}, X_{\bar{\mathcal{T}}}^{(m)}| W_{m},Q_{1:S}^{(m)}) = \mathrm{rank}\ Q_{1:S}^{(m)}[\bar{m}] \log S.
   \end{align}
   Therefore, we can rewrite (\ref{sum2}) as follows:
   \begin{align}
    &H(X_{\mathcal{T}}^{(m)} | W_{m},Q_{1:S}^{(m)}) = H(X_{\mathcal{T}}^{(m)}, X_{\bar{\mathcal{T}}}^{(m)}| W_{m},Q_{1:S}^{(m)})
    \\&\Leftrightarrow \mathrm{rank}\ \sum_{j\in \mathcal{T}} \mathrm{rank}\ Q_{j}^{(m)}[\bar{m}]\log S =  \mathrm{rank}\ Q_{1:S}^{(m)}[\bar{m}] \log S
    \\&\Leftrightarrow \mathrm{rank}\ \sum_{j\in \mathcal{T}} \mathrm{rank}\ Q_{j}^{(m)}[\bar{m}] = \mathrm{rank}\ Q_{1:S}^{(m)}[\bar{m}] .
   \end{align}
\end{proof}

\section{Conclusion}
In this study, we derived the necessary and sufficient conditions for privacy, correctness, and capacity achievability for standard and colluding PIR.
The conditions for standard PIR are summarized in Table \ref{sum}, and the conditions for colluding PIR are summarized in Table \ref{colludingterms}.  
\begin{table}[H]
    \begin{center}
    \caption{Conditions for Standard PIR}  \label{sum}
    \begin{tabular}{|c||c|}
       Properties   &  Conditions \\ \Hline{1.5pt}
    Correctness &  $\exists D_m,  D_m \begin{pmatrix}Q_1^{(m)}\\\vdots \\ Q_S^{(m)}\end{pmatrix} = 
    \begin{pmatrix}
      \mathbf{O}_{(m-1)L_w \times L_w}\\
      \mathbf{E}\\
      \mathbf{O}_{(M-m)L_w \times L_w}
     \end{pmatrix}^{\top}$
    \\ \hline
    Privacy & $\forall j \in [1:S], |\{Q_{1:S}^{(\theta)} | Q_{j}^{(\theta)} = q_j\}|=M|\{Q_{1:S}^{(m)} | Q_{j}^{(m)} = q_j\}|$\\ \hline 
    Capacity & 
     $\forall j \in [1:S],\forall m \in [1:M], \mathrm{rank}\ Q_{1:S}^{(m)}[\bar{m}]=\mathrm{rank}\ Q_j^{(m)}[\bar{m}],$ \\ 
     & $\forall \bar{\mathcal{I}}\subset [1:M]\setminus\{m\}, \mathrm{rank}\ Q_{1:S}^{(m)}[\mathcal{I}]=\sum_{j=1}^{S} \mathrm{rank}\ Q_j^{(m)}[\mathcal{I}]$\\ \Hline{1.5pt}
    \end{tabular}
    \end{center}
    \end{table}
    \begin{table}[H]
      \begin{center}
  \caption{Conditions for Colluding PIR}  \label{colludingterms}
  \begin{tabular}{|c||c|}
        Properties    & Conditions \\ \Hline{1.5pt}
  Correctness & 
  $\exists D_m,  D_m \begin{pmatrix}Q_1^{(m)}\\\vdots \\ Q_S^{(m)}\end{pmatrix} = 
  \begin{pmatrix}
    \mathbf{O}_{(m-1)L_w \times L_w}\\
    \mathbf{E}\\
    \mathbf{O}_{(M-m)L_w \times L_w}
   \end{pmatrix}^{\top}$
  \\ \hline
  Privacy & $\forall \mathcal{T}\subset [1:S],|\{Q_{1:S}^{(\theta)} | Q_{\mathcal{T}}^{(\theta)} = q_{\mathcal{T}}\}|=M|\{Q_{1:S}^{(m)} | Q_{\mathcal{T}}^{(m)} = q_{\mathcal{T}}\}|$\\ \hline 
  Capacity &  
  $\forall \mathcal{T}\subset [1:S],|\mathcal{T}|=T, \forall m \in [1:M], \mathrm{rank}\ Q_{1:S}^{(m)}[\bar{m}]=\sum_{j\in \mathcal{T}} \mathrm{rank}\ Q_{j}^{(m)}[\bar{m}]$,\\
  &$\forall \bar{\mathcal{I}}\subset [1:M]\setminus\{m\}, \mathrm{rank}\ Q_{1:S}^{(m)}[\mathcal{I}]=\sum_{j=1}^{S} \mathrm{rank}\ Q_j^{(m)}[\mathcal{I}]$\\ \Hline{1.5pt}
 \end{tabular}
  \end{center}
  \end{table}

\bibliography{bib} 
\bibliographystyle{junsrt} 
\end{document}